\newcommand\blfootnotea[1]{%
  \begingroup
  \renewcommand\thefootnote{}\footnote{#1}%
  \endgroup
}
\crefname{equation}{equation}{equations}
\crefname{lemma}{lemma}{lemmata}
\crefname{claim}{claim}{claims}
\crefname{theorem}{theorem}{theorems}
\crefname{proposition}{proposition}{propositions}
\crefname{corollary}{corollary}{corollaries}
\crefname{claim}{claim}{claims}
\crefname{remark}{remark}{remarks}
\crefname{definition}{definition}{definitions}
\crefname{fact}{fact}{facts}
\crefname{question}{question}{questions}
\crefname{condition}{condition}{conditions}
\crefname{algorithm}{algorithm}{algorithms}
\crefname{assumption}{assumption}{assumptions}
\crefname{problem}{problem}{problems}
\newtheorem{theorem}{Theorem}[section]
\newtheorem{lemma}[theorem]{Lemma}
\newtheorem{definition}[theorem]{Definition}
\newtheorem{fact}[theorem]{Fact}
\theoremstyle{definition}
\newtheorem{remark}[theorem]{Remark}
\newtheorem{problem}[theorem]{Problem}
\newcommand{\poly}{\mathrm{poly}}
\newcommand{\polylog}{\mathrm{polylog}}
\newcommand{\Var}{\mathbf{Var}}
\def\R{\mathbb R}
\def\N{\mathbb N}
\def\Z{\mathbb Z}
\newcommand{\cA}{\mathcal{A}}
\newcommand{\cD}{\mathcal{D}}
\newcommand{\cL}{\mathcal{L}}
\newcommand{\cN}{\mathcal{N}}
\newcommand{\cP}{\mathcal{P}}
\newcommand{\cU}{\mathcal{U}}
\newcommand*{\pE}{\tilde{\E}}
\newcommand{\Paren}[1]{\left(#1\right)}
\newcommand{\Brac}[1]{\left[#1\right]}
\newcommand{\Set}[1]{\left\{#1\right\}}
\newcommand{\norm}[1]{\lVert#1\rVert}
\newcommand{\Norm}[1]{\left\lVert#1\right\rVert}
\newcommand{\iprod}[1]{\langle#1\rangle}
\newcommand{\hide}[1]{}
\DeclareMathOperator*{\pr}{\mathrm{Pr}}
\DeclareMathOperator*{\E}{\mathbf{E}}
\def\d{\mathrm{d}}
\newcommand*{\ak}{\cA_{\text{$k$-sparse}}}
\def\colorful{1}
\title{List-Decodable Sparse Mean Estimation \\ via Difference-of-Pairs Filtering\blfootnotea{Author names are listed in alphabetical order.}}
\author{
Ilias Diakonikolas\thanks{Supported by NSF Medium Award CCF-2107079,
NSF Award CCF-1652862 (CAREER), a Sloan Research Fellowship, and
a DARPA Learning with Less Labels (LwLL) grant.}\\
University of Wisconsin-Madison\\
{\tt ilias@cs.wisc.edu}\\
\and
Daniel M. Kane\thanks{Supported by NSF Medium Award CCF-2107547,
NSF Award CCF-1553288 (CAREER), a Sloan Research Fellowship, and a grant from CasperLabs.}\\
University of California, San Diego\\
{\tt dakane@cs.ucsd.edu}
\and
Sushrut Karmalkar\thanks{Supported by NSF under Grant \#2127309 to the Computing Research Association for the CIFellows 2021 Project.}\\
University of Wisconsin-Madison\\
{\tt skarmalkar@wisc.edu}\\
\and
Ankit Pensia\thanks{Supported by NSF grants NSF Award CCF-1652862 (CAREER), DMS-1749857, 
and CCF-1841190.}\\
University of Wisconsin-Madison\\
{\tt ankitp@cs.wisc.edu}\\
\and
Thanasis Pittas\thanks{Supported by NSF Award CCF-1652862 (CAREER) and 
NSF Medium Award CCF-2107079.}\\
University of Wisconsin-Madison\\
{\tt pittas@wisc.edu}\\
}
\begin{document}

\maketitle

\begin{abstract}%
We study the problem of list-decodable \emph{sparse} mean estimation. 
Specifically, for a parameter $\alpha \in (0, 1/2)$, we are given $m$ points in $\R^n$, 
$\lfloor \alpha m \rfloor$ of which are i.i.d.\ samples from a distribution $D$ 
with unknown $k$-sparse mean $\mu$. 
No assumptions are made on the remaining points, 
which form the majority of the dataset. The goal is to return a small list of candidates 
containing a vector $\hat \mu$ such that $\norm{\hat \mu - \mu}_2$ is small. 
Prior work had studied the problem of list-decodable mean estimation in the dense setting. 
In this work, we develop a novel, conceptually simpler technique 
for list-decodable mean estimation. As the main application of our approach, 
we provide the first sample and computationally efficient algorithm 
for list-decodable sparse mean estimation. In particular, for distributions with  
``certifiably bounded'' $t$-th moments in $k$-sparse directions 
and sufficiently light tails, our algorithm achieves error of $(1/\alpha)^{O(1/t)}$ 
with sample complexity $m = (k\log(n))^{O(t)}/\alpha$ and running time $\poly(mn^t)$. 
For the special case of Gaussian inliers, our algorithm achieves the optimal error guarantee of 
$\Theta (\sqrt{\log(1/\alpha)})$ with quasi-polynomial sample and computational complexity. 
We complement our upper bounds with nearly-matching statistical query 
and low-degree polynomial testing lower bounds. 
\end{abstract}

\setcounter{page}{0}

\thispagestyle{empty}

\newpage

\section{Introduction} %
\label{sec:introduction}

It is well-established that when a dataset is corrupted by outliers, 
many commonly-used estimators fail to produce reliable estimates~\cite{Tukey60,AndBHHRT72}.
The field of robust statistics was developed to perform reliable statistical inference 
in the presence of a constant fraction of outliers, 
even when the data is high-dimensional~\cite{Huber09,HampelEtalBook86}.
Although statistical rates of high-dimensional robust estimation problems 
are relatively well-understood by now~\cite{DonLiu88a,Yatracos85,Donoho92,Huber09,CheGR16}, 
all of the estimators developed in these works were computationally inefficient, 
with runtime exponential in the dimension. 
The goal of algorithmic robust statistics, beginning with the works of \cite{DKKLMS16,LaiRV16}, 
is to design computationally efficient algorithms for high-dimensional robust estimation tasks.
We refer the reader to the survey~\cite{DK19-survey} for an introduction to this field.

The bulk of the recent progress in algorithmic robust statistics has focused 
on the setting where the fraction of outliers is a small constant, 
and the majority of samples are inliers, 
see, e.g.,~\cite{DKKLMS16, LaiRV16,KlivansKM18}. 
In contrast, when the fraction of outliers outnumbers the fraction of inliers, 
it is generally information-theoretically impossible to output 
a single estimate with non-vacuous error guarantees.  
In such situations, we allow the algorithm to return a small list of candidates 
such that one of the candidates is close to the true parameter. 
This \emph{list-decodable} learning setting was first introduced 
in \cite{BBV08} and developed in~\cite{CSV17}. We define the model below.

\begin{definition}[List-Decodable Learning]\label{def:LDL}
Given a parameter $0 < \alpha < 1/2$ and a distribution family $\cD$ on $\R^n$, 
the algorithm specifies $m \in \Z_+$ and observes a set of $m$ samples constructed as follows: 
First, a set $S$ of $\lfloor \alpha m \rfloor$ i.i.d.\ samples are drawn from an (unknown) 
distribution $D \in \cD$. Then, an adversary is allowed to inspect $S$ 
and choose a multiset $E$  of $m - \lfloor \alpha m \rfloor$ points. 
The multiset $T$, defined as  $T:= S \cup E$, of $m$ points is given as input to the algorithm. 
We say that $D$ is the distribution of inliers, the elements in $S$ are inliers, 
the points in $E$ are outliers, and $T$ is an $(1-\alpha)$-corrupted dataset of $S$. 
The goal is to output a ``small'' list of hypotheses $\cL$ 
at least one of which is (with high probability) close to the target  parameter of $D$. 
\end{definition}

The list-decodable learning setting, interesting in its own right, 
is closely related to several well-studied problems. 
A natural example is the problem of parameter recovery 
from mixture models, for instance, Gaussian mixtures 
(see, e.g.,~\cite{Dasgupta:99,vempala2004spectral,arora2005learning,
dasgupta2007probabilistic,kumar2010clustering,regev2017learning}).
List-decodable mean estimation can serve as a key step in learning mixtures, 
since one can treat any component of the mixture as the set of inliers 
(see, e.g.,~\cite{CSV17,DKS18-list,KStein17}).
In addition, list-decodable learning can be used to model data 
in important applications where mixture models are not sufficient, 
such as crowdsourcing (see, e.g.,~\cite{svc16,SteinhardtKL17,MeisterV18}) 
and community detection in stochastic block models (e.g.,~\cite{CSV17}). 

Prior work on list-decodable mean estimation has focused on the unstructured setting, 
where the target mean is an arbitrary dense vector 
(see, e.g.,~\cite{CSV17,KStein17,DKS18-list,raghavendra2020list, cherapanamjeri2020list, diakonikolas2020list, diakonikolas2021list, diakonikolas2022list}). 
Sparse models have proven to be useful in a wide range of statistical tasks, 
and thus understanding the statistical and computational 
aspects of sparse estimation is a fundamental problem 
(see, e.g.,~\cite{eldar2012compressed,Hastie15,vandeGeer16}).
Here we study list-decodable {\em sparse} mean estimation, 
where the target mean vector is known to be {\em $k$-sparse}, 
i.e., it has at most $k$ non-zero coordinates. Given an $(1-\alpha)$-corrupted set of samples, 
our goal is to output, in a computationally-efficient manner, 
a small list of vectors containing a good approximation $\widehat{\mu}$ 
to the true mean $\mu$ (cf.\ \Cref{def:LDL}). 
{Importantly, the goal is to achieve this with \emph{far fewer samples} than in the dense setting. 
While the dense setting would require sample size polynomial in $n$ --- the ambient dimension of the data --- 
the goal here is to solve the problem in number of samples polynomial in $k$ and only polylogarithmic in $n$. }

In this paper, we present a novel and conceptually simple technique for 
list-decodable mean estimation (that is applicable even in the dense setting). 
Combining our framework with the concentration results from  \cite{DKKPP22}, 
we obtain the first sample and computationally efficient algorithm for list-decodable sparse mean estimation. 
We note that, while prior results~\cite{KStein17,raghavendra2020list} 
can possibly be modified to incorporate the sparsity framework of \cite{DKKPP22} 
after sufficient effort, a notable contribution of our work is a general and 
conceptually simpler framework for list-decodable estimation, 
which can easily be adapted to incorporate various structural constraints.

\subsection{Related Work}
Efficient estimators for high-dimensional robust statistics 
with sparsity constraints have been recently developed for various problems, 
such as mean estimation and PCA (see, e.g.,~\cite{BDLS17, DKKPS19-sparse, Li17-sparse, DKKPP22}).
The problem of list-decodable mean estimation was first introduced in~\cite{CSV17}, 
in which the authors achieved an error guarantee of $\tilde{O}(1/\sqrt{\alpha})$ 
for distributions with bounded second moment; 
this guarantee turned out to be optimal for this distributional assumption (see~\cite{DKS18-list}). 
Subsequent work improved the algorithmic guarantees 
for this problem (see, e.g.,~\cite{cherapanamjeri2020list, diakonikolas2020list, diakonikolas2021list}).

To achieve better error guarantees, it is necessary to make further assumptions 
on the distribution, e.g., Gaussianity or bounded higher moments. 
In terms of the minimax optimal rate\footnote{Informally speaking, 
we say that the minimax optimal rate is $\gamma$ if (i) no algorithm 
(regardless of sample size and runtime) has error $o(\gamma)$ 
with a list of size independent of dimension $n$, 
and (ii) there is an algorithm with error $O(\gamma)$ with a  list of size independent of $n$; 
in our case, $\poly(n/\alpha)$ samples and $O(1/\alpha)$ list size suffice.}, 
\cite{DKS18-list} showed that the optimal error for Gaussians is $\Theta(\sqrt{\log(1/\alpha)})$.
They also showed that any SQ algorithm that achieves the optimal error for Gaussians 
must take either super-polynomial time or samples, and presented an algorithm with matching guarantees.
When the distribution $D$ has bounded $t$-th moment for some even $t> 2$ (
i.e., $\E[\langle v, X - \E[X]\rangle^t]$ is bounded for all unit vectors $v$), 
\cite{DKS18-list} proved that the minimax optimal rate is $\Theta(\alpha^{-1/t})$.
For distributions with certifiably bounded $t$-th moments, 
\cite{KStein17, raghavendra2020list} provided algorithms obtaining 
an error rate of $O((1/\alpha)^{O(1/t)})$
with sample complexities $m = \poly(n^t/\alpha)$, 
and runtimes $\poly(m^tn^t)$ and $\poly(mn)^{\poly(t,1/\alpha)}$, respectively.
Recently, in the context of moment estimation and clustering problems, 
\cite{steurer2021sos} showed how to improve the dependence 
on $m$ in the runtime of algorithms 
that depend on certifiably bounded moments from $\poly(m^tn^t)$ to $\poly(mn^t)$. 
While it is possible that their result might be applied to \cite{KStein17,raghavendra2020list}, 
our algorithmic technique naturally lends itself to achieve runtime $\poly(mn^t)$ 
for the dense as well as sparse settings. 
We provide detailed comparisons with (see, e.g.,~\cite{DKS18-list,KStein17,raghavendra2020list}) in \Cref{sec:our-tech}.
Finally, we mention that the list-decodable setting has also been studied 
in the context of  linear regression (see, e.g.,~\cite{karmalkar2019list,raghavendra2020list,diakonikolas2021statistical}) 
and subspace recovery (see, e.g.,~\cite{BK20-subspace,RY20-subspace}). %

\subsection{Our Results} \label{ssec:results}

{We demonstrate an algorithm to perform list-decodable sparse 
	mean estimation with $(k \log n)^{O(t)}$ samples, 
	when the mean $\mu \in \R^n$ is known to be $k$-sparse.} 
For this to be possible, we will require some assumptions 
on the underlying distribution of inliers $D$. 
Prior work in the dense setting (\cite{KStein17,raghavendra2020list}) 
assumed that the  inlier distribution $D$ in~\Cref{def:LDL} 
satisfies \emph{$d$-certifiably} bounded $t$-th moments in every direction 
(i.e., for some moment bound $M > 0$, 
$M \|v\|_2^t - \E_{X \sim D}[\iprod{v, X - \mu}^{t}]$ 
can be expressed as a sum of square polynomials of degree at most $d = O(t)$ in the entries of $v$), 
and $D$ has light tails. 
We highlight that our algorithmic technique can also be used in the dense case, 
under the same assumption, and provides qualitatively similar error guarantees 
with much simpler arguments and improved runtime. 
Below we apply our technique to the sparse setting.

Our results hold  when (i) the $t$-th moment of $D$ is $d$-certifiably bounded 
{for every $v$ that is $k$-sparse}, with $d = O(t)$, 
and (ii) $D$ has light tails.  For ease of exposition, 
we state the result assuming $D$ has subexponential tails 
(i.e., for some universal constant $c$, for all unit vectors $v$ 
and $p \in \N$, $\E_{X \sim D}[|\iprod{v, X-\mu}|^p]^{1/p} \leq cp$). \footnote{ 
	It is sufficient for $D$ to have bounded moments 
	up to degree $\poly(t \log(n))$ in the standard basis directions; 
	see \Cref{sec:prelims}.
}

\begin{restatable}[List-Decodable Sparse Mean Estimation]{theorem}{LDLsparse}\label{thm:LDL-sparse}
	Let $t$ be an integer power of two. 
	Let $D$ be a distribution over $\mathbb{R}^n$ 
	{with $k$-sparse mean $\mu$. }
	Suppose that $D$ has $t$-th moments $d$-certifiably bounded 
	in $k$-sparse directions by $M$ for some $d=O(t)$ (cf.\ \Cref{def:bounded-moments-k-sparse}) 
	and subexponential tails in the standard basis directions. There is  an algorithm which, 
	given $\alpha$, $M$, $t$, $k$, and a $(1{-}\alpha)$-corrupted set of 
	$m = (t k \log n)^{O(t)}~\max(1, M^{-2})/\alpha$ samples from $D$, 
	runs in time $\poly(mn^t)$ 
	and returns a vector $\hat{\mu}\in \R^n$ 
	such that with probability $\Omega(\alpha)$ 
	it is the case that
	$\norm{\hat{\mu} - \mu}_{2} = O_t(    M^{1/t}/\alpha^{O(1)/t})$.
\end{restatable}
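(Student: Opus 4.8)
The plan is to reduce everything to a single structural fact about the inliers and then run a \emph{difference-of-pairs} filter whose defining feature is that it never refers to the unknown mean $\mu$. First I would use the concentration machinery of \cite{DKKPP22} to establish a deterministic regularity event, holding with probability $1-1/\poly(n)$ over the draw of $S$: for every reweighting $\{w_i\}_{i\in S}$ with $0\le w_i\le 1$ and $\sum_{i\in S}w_i\ge\tfrac12|S|$ and every $k$-sparse unit vector $v$, the empirical $t$-th moment $\tfrac{1}{|S|}\sum_{i\in S}w_i\iprod{v,x_i-\mu}^t$ is $O(M)$ and this bound is \emph{SoS-certifiable} in the variable $v$ (with the $k$-sparsity of $v$ encoded by the sparsity gadget of \cite{DKKPP22}, keeping the certificate degree $O(t)$); additionally $\norm{x-\mu}_\infty=O(\log(mn))$ for all $x\in S$. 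The stated sample bound $m=(tk\log n)^{O(t)}\max(1,M^{-2})/\alpha$ is exactly what is needed to push the certifiable moment bound of $D$ through to all such reweightings, via a union bound over the $\binom nk$ supports together with the tensor-concentration estimates of \cite{DKKPP22}; the hypothesis that $t$ is a power of two is used to keep the SoS proofs of the auxiliary (Lyapunov/AM--GM-type) inequalities clean.

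The conceptual core is that $X-X'$ for i.i.d.\ $X,X'\sim D$ is \emph{symmetric with mean $0$}, so its $t$-th moment in a $k$-sparse direction is $\sum_{j\text{ even}}\binom tj\E[\iprod{v,X-\mu}^j]\,\E[\iprod{v,X'-\mu}^{t-j}]\le 2^tM\norm v_2^t$ by the power-mean inequality, and --- crucially --- this bound is again $O_t(M)$-SoS-certifiable; thus certifiable moment control transfers to pairwise differences \emph{without any centering by $\mu$}. Conversely, by Jensen's inequality, for any finite weighting $w$ of the corrupted set $T$ and any $v$, the weighted pair moment $\tfrac{1}{(\sum_i w_i)^2}\sum_{i,j}w_iw_j\iprod{v,x_i-x_j}^t$ dominates $\tfrac{1}{\sum_i w_i}\sum_i w_i\iprod{v,x_i-\mu_w}^t$ with $\mu_w=\tfrac{1}{\sum_i w_i}\sum_i w_ix_i$ --- so bounded \emph{pair} $t$-th moments automatically give bounded moments around the reweighted mean. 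This lets me phrase the whole goal in terms of the observed points: find a weighting $w$ with $\sum_i w_i=\lfloor\alpha m\rfloor$, $0\le w_i\le1$, whose weighted pair $t$-th moments are SoS-certifiably $\le\tau:=O_t(M)$ in all $k$-sparse directions. By the regularity event, $w=\mathbf 1_S$ is such a weighting.

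Starting from $w\equiv1$ on $T$, I would iterate: use an SoS program of degree $O(t)$ (which is where the $n^{O(t)}$ in the runtime enters) either to certify that the current $w$ has bounded pair $t$-th moments in all $k$-sparse directions, or to produce a violating pseudo-direction; in the latter case, downweight the points contributing most to the violating $t$-th moment, which --- because the regularity event bounds the contribution of $S$ --- removes strictly more outlier weight than inlier weight. When continuing would threaten to remove more than a $(1-\alpha)$-fraction of the original $S$-weight, branch (multifilter-style, in the spirit of \cite{DKS18-list}) rather than filter, so that one branch always retains $\ge\alpha\cdot\lfloor\alpha m\rfloor$ of the $S$-weight; since distinct surviving branches carry weight $\ge\lfloor\alpha m\rfloor$ and are separated in some direction, there are $O(1/\alpha)$ of them. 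Each surviving $w$ has bounded pair $t$-th moments, hence bounded moments around $\mu_w$, and the good branch additionally overlaps $S$ in an $\ge\alpha$-fraction of its weight; the standard two-sided--bounded-moments-plus-overlap estimate then gives $|\iprod{v,\mu_w-\mu}|=O_t(M^{1/t}/\alpha^{O(1)/t})\norm v_2$ for all $k$-sparse $v$, and since $\mu$ is $k$-sparse the top-$k$ truncation $\hat\mu$ of this $\mu_w$ satisfies $\norm{\hat\mu-\mu}_2=O_t(M^{1/t}/\alpha^{O(1)/t})$. Collecting the top-$k$ truncations of $\mu_w$ over all branches gives a list of size $O(1/\alpha)$ containing a good $\hat\mu$, and returning a uniformly random element succeeds with probability $\Omega(\alpha)$.

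I expect the main obstacle to be the list-extraction bookkeeping: designing the filter-or-branch rule so that it simultaneously (i) halts with only $O(1/\alpha)$ branches, (ii) never removes more than a $(1-\alpha)$-fraction of the $S$-weight along the good branch --- which is precisely what keeps the final loss at $\alpha^{-O(1)/t}$ rather than, say, $\alpha^{-1}$ --- and (iii) interfaces correctly with the SoS sparsity encoding, so that each per-iteration optimization stays at degree $O(t)$ and the violating pseudo-direction yields an honest downweighting. A secondary technical point is the certifiable-moment-transfer lemma for differences: showing that the power-mean inequality bounding the even cross-moments has a sum-of-squares proof of degree $O(t)$ (this is where the power-of-two hypothesis on $t$ is really used), and checking that the $\poly(k\log n)$-sample regularity event is strong enough to certify the pair-moment bound for the \emph{sparsity-encoded} SoS system, not merely for the genuine $k$-sparse directions.
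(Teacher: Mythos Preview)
You correctly identify the central idea --- taking pairwise differences so that the filter never needs the unknown mean $\mu$ --- and your structural claims (SoS-certifiable moment transfer to differences, and that the weighted pair $t$-th moment dominates the $t$-th moment about the weighted mean via Jensen) are right. But from that point on your argument diverges from the paper's and carries a real gap.

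Your filter downweights a \emph{sample} $x_i$ according to its aggregate contribution $\sum_j w_j\,\tilde\E[\iprod{v,x_i-x_j}^t]$ to the violating pair moment. This quantity is \emph{not} controlled by the inlier regularity event: it contains all the good--bad cross terms, and an adversary can make those dominate. Concretely, if the $(1-\alpha)m$ outliers sit at a single point at distance $R$ from the inlier mean, then every inlier's aggregate score is $\approx(1-\alpha)m\,R^t$ while every outlier's score is only $\approx\alpha m\,R^t$, so your filter preferentially removes inliers. You try to patch this with a multifilter ``branch when too much $S$-weight would be removed,'' but $S$ is unknown, you never specify an observable branching criterion, and your bound ``distinct surviving branches carry weight $\ge\lfloor\alpha m\rfloor$ and are separated, hence there are $O(1/\alpha)$'' is not how multifilter size bounds work (branches are overlapping weightings, not disjoint subsets). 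You flag exactly this as your ``main obstacle,'' and it is: without a concrete branching rule and a potential argument controlling the tree size, the proof does not go through. The paper in fact singles out multifilters as the approach it is deliberately avoiding.

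The paper resolves this by never aggregating scores over samples. It filters the set of \emph{pairs} $T=\{x-y:x,y\in S\}$ directly: a pair is removed with probability proportional to $\tilde\E[\iprod{v,x-y}^t]$, and since the good--good pairs $T_{\text g}$ have certifiably bounded total moment, a submartingale argument shows at least half of $T_{\text g}$ survives. This leaves a \emph{graph} $G$ on $S$ (edges are surviving pairs) with bounded $k$-sparse moments in the sense that $\sum_{(x,y)\in E}\iprod{v,x-y}^t\le O(M)|S|^2$. The extraction step is then purely graph-theoretic, with no branching: pick a random vertex $x$; its neighborhood $W=N_G(x)$ is, with probability $\Omega(\alpha)$, large and nearly a clique in the overlap graph $R_\delta(G)$ (vertices adjacent iff they share $\ge\delta|S|$ common $G$-neighbors); a simple pruning turns $W$ into an actual clique $W'$ in a second overlap graph $R_{\beta/3}(R_\delta(G))$. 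Two facts finish the argument: (i) bounded moments transfer from $G$ to $R_\gamma(G)$ with only a $2^t/\gamma$ loss (triangle inequality through the common neighbor), and (ii) any clique in a bounded-moment graph that overlaps the inliers in an $\alpha$-fraction has its mean within $O_t(M^{1/t}/\alpha^{O(1)/t})$ of the inlier mean in every $k$-sparse direction. Truncating to the top $k$ coordinates then gives the $\ell_2$ bound. So the missing piece in your plan is precisely what the paper's \textsc{Rounding} procedure (random vertex $\to$ overlap graph $\to$ prune to clique) supplies, replacing the multifilter entirely.
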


Note that with high probability over the inliers and for any choice of outliers,  with probability $\Omega(\alpha)$ over the internal randomness of our algorithm, we find an estimate $\hat{\mu}$ close to $\mu$. By running our algorithm $O(1/\alpha)$ times, we can generate a list of size $O(1/\alpha)$ such that with probability $0.9$ the list contains the desired estimate $\hat{\mu}$.

Notably, for the important special case of Gaussian $\cN(\mu,I)$ inliers, 
our algorithm achieves the information-theoretically optimal error rate. 
This is because $\cN(\mu, I)$ has its $t$-th moment 
certifiably bounded by $t^{t/2}$ in all directions. 
Specifically, for a large enough constant $C > 0$, we obtain the following result:
Given $\alpha$, $t$, and a $(1- \alpha$)-corrupted set of 
$m \geq (t k \log n)^{Ct}$ samples from $\cN(\mu,I)$ for a $k$-sparse vector $\mu$, 
our algorithm runs in time $\poly(mn^t)$ 
and with probability $\Omega(\alpha)$ outputs a vector $\widehat{\mu}$ 
such that $\|\widehat{\mu} - \mu\|_2 \leq O(\sqrt{t}/\alpha^{C/t})$. 
Thus, by taking $t = C\log(1/\alpha)$, 
we obtain the optimal error of $\Theta (\sqrt{\log(1/\alpha)})$ 
in quasi-polynomial sample and time complexity.

We also note that a broad and natural class of distributions 
satisfying \Cref{def:bounded-moments-k-sparse} is the class of 
$\sigma$-Poincare distributions {(see, e.g., \cite{KStein17})}. 
A distribution is said to be $\sigma$-Poincare 
if for all differentiable functions 
$f:\R^n \rightarrow \R$, we have that 
$\Var_{X \sim D}\Brac{f(X)} \leq \sigma^2 \E_{X \sim D} [ \Norm{\nabla f(X)}_2^2]$. 

We complement our algorithm of \Cref{thm:LDL-sparse} 
with a qualitatively matching lower bound in the Statistical Query (SQ) model~\cite{Kearns:98}.
Instead of directly accessing samples, SQ algorithms are only allowed to perform 
adaptive queries of expectations of bounded functions of the underlying distribution, 
up to some desired tolerance (c.f.\ \Cref{def:stat}). 
The class of SQ algorithms is fairly broad: a wide range of known algorithmic techniques in
machine learning are known to be implementable in the SQ model (see, e.g.,~\cite{FGR+13}).

An SQ lower bound is an unconditional statement that { for any SQ algorithm,} 
either the number of queries $q$ must be large or the tolerance, $\tau$, of some query must be small. 
Since simulating a query of tolerance $\tau$ by averaging i.i.d.\ samples may need up to $\Omega(1/\tau^2)$ many of them, 
SQ lower bounds are naturally interpreted as a tradeoff between runtime $\Omega(q)$ 
and sample complexity $\Omega(1/\tau^2)$.
An adaptation of the result in~\cite{DKS18-list} 
yields \Cref{thm:sqlb-informal}, which indicates that the $k^{O(t)}$ factor 
in the sample complexity of \Cref{thm:LDL-sparse} 
might be necessary for efficient algorithms, {even for Gaussian inliers.}

\begin{theorem}[SQ Lower Bound, Informal]\label{thm:sqlb-informal}
	Consider the problem of list-decoding the mean of $\cN(\mu,I)$, 
	for a $k$-sparse vector $\mu \in \R^n$, up to error better than $O((t\alpha)^{-1/t})$. 
	Any SQ algorithm that solves the problem does one of the following: 
	(i) It returns a list of size $n^{ k^{\Omega(1)}}$, 
	(ii) it uses at least one query of tolerance $k^{-\Omega(t)} \exp({O(t\alpha)^{-2/t}})$, 
	or (iii) it {makes at least $n^{ k^{\Omega(1)}}$ queries.}
\end{theorem}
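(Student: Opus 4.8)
The plan is to adapt the statistical-query (SQ) lower bound of~\cite{DKS18-list} (within the framework of~\cite{FGR+13}) by hiding a \emph{$k$-sparse} direction rather than a dense one. Fix a moment parameter $m=\Theta(t)$. From~\cite{DKS18-list} we import a univariate distribution $A$ on $\R$ with: (i) $A=\alpha\,\cN(\nu,1)+(1-\alpha)\,A'$ for a distribution $A'$, where $\nu$ is a suitable constant multiple of $(t\alpha)^{-1/t}$ (when matching $m$ moments the construction accommodates any $\nu$ up to $\Theta(\sqrt{t}\,\alpha^{-1/t})$, which suffices); (ii) the first $m$ moments of $A$ agree with those of $\cN(0,1)$; and (iii) writing $\phi$ for the standard univariate Gaussian density, $\chi^2(A,\phi)\le\exp\Paren{O\Paren{(t\alpha)^{-2/t}}}=:B$ --- indeed this divergence is $O(1)$ when $\nu\le\sqrt{2\log(1/\alpha)}$, and otherwise is of order $\alpha^2 e^{\nu^2}\le B$. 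For any $k$-sparse unit vector $v\in\R^n$, define $P_v$ on $\R^n$ by the density $P_v(x)=\cN(0,I)(x)\cdot A(\iprod{v,x})/\phi(\iprod{v,x})$. This is a genuine probability density whose marginal along $\mathrm{span}(v)$ is $A$ and along $v^\perp$ is standard Gaussian (independent of the former), and substituting the mixture form of $A$ gives $P_v=\alpha\,\cN(\nu v,I)+(1-\alpha)(\cdots)$. Hence each $P_v$ is a legitimate $(1-\alpha)$-corrupted instance in the sense of~\Cref{def:LDL} whose inlier component is $\cN(\mu_v,I)$ with $k$-sparse mean $\mu_v=\nu v$, $\norm{\mu_v}_2=\nu$.

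Next I would construct the hidden family. Let $V$ be a collection of $N=n^{\Theta(k^c)}$ unit vectors, for a small constant $c>0$, each of the form $v=k^{-1/2}\sum_{i\in S_v}\varepsilon_i e_i$ with a uniformly random size-$k$ support $S_v$ and uniformly random signs $\varepsilon_i\in\{\pm1\}$ (any sufficiently good constant-weight binary code serves equally well). Under mild regime assumptions ($n\ge\poly(k)$ and $k\ge\polylog(n)$, which hold in the sparse regime of interest), a union bound over the $\binom{N}{2}$ pairs shows that with probability $1-o(1)$ every pair $u\ne v$ in $V$ has $|S_u\cap S_v|=O(k^c)$, hence, by Hoeffding over the signs, $\abs{\iprod{u,v}}\le k^{-\Omega(1)}$; fix one such $V$. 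Two consequences follow. First, the hidden means are well separated: $\norm{\mu_u-\mu_v}_2=\nu\sqrt{2-2\iprod{u,v}}\ge\nu$ for $u\ne v$, so a single vector lies within $\nu/3$ of at most one $\mu_v$; therefore a list-decoder that always outputs lists of size $L$ and attains error better than $\nu/3$ narrows the hidden $v$ down to at most $L$ candidates of $V$. Second, the pairwise SQ correlations are tiny, which I verify next.

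Write $g_v:=P_v/\cN(0,I)=A(\iprod{v,\cdot})/\phi(\iprod{v,\cdot})$, so the correlation $\chi_{\cN(0,I)}(P_u,P_v):=\E_{x\sim\cN(0,I)}\Brac{(g_u(x)-1)(g_v(x)-1)}$ is obtained by expanding $A/\phi-1=\sum_j c_j h_j$ in the normalized Hermite basis: property (ii) forces $c_j=0$ for $j\le m$, and the Hermite orthogonality identity for two standard Gaussians of correlation $\iprod{u,v}$ gives
\[
  \chi_{\cN(0,I)}(P_u,P_v)=\textstyle\sum_{j>m}c_j^2\,\iprod{u,v}^{j},\qquad \Bigabs{\chi_{\cN(0,I)}(P_u,P_v)}\le\chi^2(A,\phi)\,\abs{\iprod{u,v}}^{m+1}\le B\,k^{-\Omega(m)},
\]
while $\chi_{\cN(0,I)}(P_v,P_v)=\chi^2(A,\phi)\le B$. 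Now invoke the generic SQ-dimension lower bound of~\cite{FGR+13,DKS18-list}: for a family of $N$ distributions with reference distribution $\cN(0,I)$, pairwise correlations at most $\gamma:=B\,k^{-\Omega(m)}$, and self-correlations at most $\beta:=B$, any SQ algorithm that identifies the hidden member (even up to a sublinear-in-$N$ list of candidates) must either make $\ge N^{\Omega(1)}$ queries or issue a query of tolerance $O(\sqrt{\gamma})=k^{-\Omega(t)}\exp\Paren{O\Paren{(t\alpha)^{-2/t}}}$ (using $m=\Theta(t)$, so $k^{-\Omega(m)}=k^{-\Omega(t)}$ and $\sqrt{B}$ is absorbed into the $\exp$ factor).

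Combining with the separation property: an algorithm that list-decodes $\cN(\mu,I)$ with $k$-sparse $\mu$ to error better than $(t\alpha)^{-1/t}$ either always returns lists of size $\ge N^{\Omega(1)}=n^{k^{\Omega(1)}}$ (case (i)); or, having shorter lists, it identifies the hidden $v\in V$ (hence $P_v$) up to a sublinear list of candidates, and so by the previous paragraph it uses a query of tolerance $k^{-\Omega(t)}\exp\Paren{O\Paren{(t\alpha)^{-2/t}}}$ (case (ii)) or at least $N^{\Omega(1)}=n^{k^{\Omega(1)}}$ queries (case (iii)). The main obstacle is the construction of $V$: it must be simultaneously \emph{large} ($n^{k^{\Omega(1)}}$) and sufficiently \emph{spread out} that \emph{all} $\binom{N}{2}$ pairwise inner products are at most $k^{-\Omega(1)}$, so that $\gamma=B\,\abs{\iprod{u,v}}^{m+1}$ shrinks to $k^{-\Omega(t)}$; this is exactly where the sparse setting costs us relative to the dense one, and it is what pins the lower bound at the $n^{k^{\Omega(1)}}$ scale (rather than $n^{\Omega(k)}$ or $2^{n^{\Omega(1)}}$) and forces the mild assumptions relating $n$ and $k$. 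A secondary point needing care is tracking the univariate construction so that $\chi^2(A,\phi)=\exp(O((t\alpha)^{-2/t}))$ enters the tolerance exactly as stated, and confirming via the separation property that allowing a list instead of a single estimate does not weaken the bound.
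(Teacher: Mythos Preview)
Your proposal is correct and follows essentially the same approach as the paper. The paper's formal argument (Theorem~\ref{thm:sq-lower-bd-formal}) phrases things via a reduction to hypothesis testing and then invokes~\cite[Corollary~6.7]{DKKPP22} as a black box, but it explicitly notes that the informal Theorem~\ref{thm:sqlb-informal} can be obtained \emph{directly} for the search problem via the framework of~\cite{DKS17-sq,DKS18-list}; that direct route---moment-matching univariate $A$, hiding it along a $k$-sparse direction, building a large family of nearly-orthogonal sparse vectors, bounding pairwise correlations via the Hermite expansion, and invoking the SQ-dimension machinery---is exactly what you sketch. Your construction of $V$ with $|V|=n^{\Theta(k^c)}$ and pairwise inner products $k^{-\Omega(1)}$ is precisely the content of the cited corollary (there the exponent is $\sqrt{k}/16$, matching the $k^{\Omega(1)}$ in the informal statement), and your separation argument to handle list outputs is the standard way the list size enters as option~(i).
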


A similar lower bound holds for the computational model of low-degree polynomial tests, 
as a consequence of the recently established relationship between the two models \cite{BBHLS20}. 
See \Cref{sec:tradeoffs} for more details about the two models and the corresponding lower bounds.

\subsection{Overview of Techniques} \label{sec:our-tech}

We begin with a brief overview of the existing techniques for the \emph{dense} list-decodable mean estimation. We then highlight the obstacles in adapting these techniques to the sparse setting. Finally, we present an overview of our list-decodable mean estimation algorithm and how that overcomes these obstacles.

\paragraph{Prior Work} In the dense case, the existing techniques 
for performing list-decoding with error better than 
$\Omega(\alpha^{-1/2})$ are quite complicated. \cite{DKS18-list} uses a 
multifilter-based technique for list-decoding spherical Gaussians. 
This methodology relies critically on knowing the higher degree moments 
of the inliers (and thus does not generalize to less specific distribution families). Moreover, this method runs into technical difficulties 
related to being unable to determine the variance of higher degree polynomials 
on the inliers without knowing the mean ahead of time. 
The other approach in the literature (see, for example, \cite{KStein17,raghavendra2020list}) 
uses the Sum-of-Squares method (SoS) to find these clusters of points. 
 The algorithm in \cite{raghavendra2020list} involves solving a nested SoS program 
and then applying a complicated rounding procedure to get the final list. It should be noted that the runtime of~\cite{raghavendra2020list} is exponential in 
$\poly(1/\alpha)$, which can be quite large. 
Finally, \cite{KStein17} gave an SoS based list-decodable mean estimation 
algorithm for the \emph{dense} case with error, sample complexity, 
and list size similar to the ones that are obtained in our work; 
but significantly worse runtime.
The approach of \cite{KStein17} has some important differences. 
First, the clustering relaxation is conceptually harder, 
involving a more complex optimization problem for each filtering step 
and second, after the filtering ends, the error guarantee scales 
with the norm of the unknown mean; 
thus a complicated re-clustering step that combines ideas 
from \cite{SteinhardtCV18} is needed to reduce the error.

We present a significantly cleaner method to perform 
the outlier removal step; 
we avoid problems like not knowing the mean ahead of time by simply taking pairs of differences of our samples 
to make their mean zero. While it is likely that either 
	of the above techniques could be adapted to the sparse setting 
	with sufficient effort, this would result in significantly more complicated algorithms. We briefly point out some difficulties below.
	
	First, to ensure that the algorithms identify subsets 
	of the samples that satisfy certifiably bounded moments 
	in all $k$-sparse directions 
	requires additional variables and constraints to the algorithms. 
	Additionally, 
	one would need to replace the bounded moments 
	in all  directions condition 
	by the corresponding condition for the sparse case, 
	and ensure that all the proof steps 
	can be modified to rely only on the latter
	-- this would result in minor modifications 
	of the original algorithms, such as thresholding 
	of candidate solution vectors. 
	
	Second, at the end of this process, 
	while the algorithm might qualitatively 
	match the error guarantee that we achieve, 
	the runtime would continue to be $(mn)^{O(t)}$ for \cite{KStein17} or 
	$(1/\alpha)^{\polylog(1/\alpha)} n^{O(\max\{ 1/\alpha^4, t\})}$ for \cite{raghavendra2020list} -- 
	both of these are qualitatively worse than the runtime 
	we achieve when $\alpha$ is sufficiently small. 
	To obtain improved runtime using these prior techniques, one would require additional ideas, e.g., from \cite{steurer2021sos},  
	to be adapted to this setting, 
	overall resulting in a far more complicated algorithm. 
	
	On the other hand, it seems difficult to adapt the multi-filtering technique from \cite{DKS18-list} 
	to the setting we consider, without any introduction 
	of an SoS component. We remind the reader that 
	the  \cite{DKS18-list} algorithm depends critically
	on knowing the higher moments of the inliers \emph{exactly}, and does not generalize to less specific distribution families. Even in the Gaussian setting, generalizing~\cite{DKS18-list} might be difficult, since it would require the design of an efficiently verifiable notion of matching higher moments in $k$-sparse directions.

We believe that our novel list-decoding technique is significantly
simpler. As a result of simplifying the optimization programs involved,
our technique naturally improves the runtime from $\poly(m^tn^t)$ 
in prior work to $\poly(mn^t)$.

\paragraph{Novel List-Decoding Mean Estimation Algorithm} 
All known list-decoding algorithms are based upon the following fundamental observation. 
Suppose that $S$ is a set of samples that contains a subset $S_\text{g}$ of samples 
with bounded  moments. If we can find another subset $S'$ of $S$ with bounded  
moments and large overlap with $S_\text{g}$, then the means of $S'$ and $S_\text{g}$ 
cannot be too far apart (see \Cref{lem:clique_round_lemma}). 
The goal of a list-decoding algorithm is to find such a subset $S'$
(or, more precisely, a small number of hypotheses for such a subset). In order to generalize this to the sparse mean setting, 
it suffices for the subsets $S_\text{g}$ and $S'$ to have bounded moments only in all $k$-sparse directions. 
This will imply that $|\iprod{ v, \mu_\text{g} - \mu_{S'}}|$ is relatively small for all $k$-sparse 
unit vectors $v$, which will in turn imply that truncating $\mu_{S'}$ 
to its $k$ largest entries will provide a suitable approximation 
to $\mu_\text{g}$ (see \Cref{fact:sparseTruncation}).

The basic idea behind our novel list-decoding technique, which we call \emph{the difference of pairs filter}, is the following: let $T$ be the set of differences 
of pairs of elements of $S$. The set $T$ will contain a relatively large subset, 
$T_\text{g}$, (consisting of the pairwise differences of elements of 
$S_\text{g}$) whose moments in $k$-sparse directions are bounded. 
Our goal will be to find a subset $T' \subset T$ that has large overlap with 
$T_\text{g}$ and also has bounded ($k$-sparse) moments.

Na\"ively, we can do this as follows. 
We start with $T' = T$. Either this set has bounded $k$-sparse 
moments (in which case we are done) or there is some sparse direction 
$v$ in which the average value of $|\iprod{v,x}|^t$ over $T'$ is 
substantially larger than the average value over $T_\text{g}$. 
By throwing away points $x$ from $T'$ with probability proportional 
to $|\iprod{v,x}|^t$, we eliminate mostly bad points. 
We repeat this until $T'$ has bounded sparse moments. 
Unfortunately, while this approach can be shown to be correct, it does not suffice for our purposes because it is computationally infeasible
in general to determine whether or not $T'$ has bounded sparse moments. 
However, if we assume additionally that $S_\text{g}$ has bounded $k$-sparse 
moments \emph{provable by a low-degree sum-of-squares proof} 
(i.e., the moment bound inequality can be re-expressed 
as a sum of square polynomials being greater than $0$), 
then so will $T_\text{g}$. 
There is an efficient algorithm to see whether or not $T'$ has certifiably bounded moments in $k$-sparse directions as well.
If $T'$ does not have certifiably bounded moments, 
standard techniques for Sum-of-Squares programs imply 
that we can manufacture a non-negative polynomial $p$ 
so that the average value of $p$ on $T'$ is substantially 
larger than the value of $p$ on any set with 
SoS-certifiable bounded moments in $k$-sparse directions.
Thus, filtering out points $x$ in $T'$ with probability proportional 
to $p(x)$ will likely remove mostly bad points. 
Using this idea, we can find such a set $T'$ efficiently (see 
\Cref{thm:filter_list_decode_mean}).

We are thus left with a set of differences of samples 
rather than a set of samples. At this point, we will need a rounding method 
that given a set $T'$ of differences with bounded $k$-sparse moments 
guaranteed to have large overlap with $T_\text{g}$, 
finds a small list of sets $S'$ with bounded $k$-sparse moments 
so that at least one of them has a large overlap with $S_\text{g}$. Note that $T'$ might be the union of $T_i - T_i$, where each $T_i$ is an individual cluster drawn from a moment-bounded distribution. This demonstrates the necessity of a rounding step to finally identify the means of individual clusters.
To achieve this, it is helpful to think of $T'$ 
as consisting of a set of pairs of elements of $S$, 
or equivalently as a graph over $S$. Given $T'$, 
our first task is to find some reasonably large subset 
of $S$ with bounded   moments
(ideally which has a large overlap with $S_{\text{g}}$).
It is not hard to see that it suffices to find any large clique 
in $T'$ (see \Cref{lem:clique_round_lemma}). 
Unfortunately, $T'$ may well not have any large cliques, 
and even if it does, finding them may be computationally difficult.
However, we are saved here by the 
observation that if $|\iprod{v, x-y}|^t$ and $|\iprod{v, y-z}|^t$ 
are both small, then so is $|\iprod{v, x-z}|^t$. 
This means that if we replace $T'$ 
by the graph $H$ of all pairs of vertices $(v,w)$, 
where $v$ and $w$ have many common neighbors in $T'$, 
this new graph will \textbf{also} have relatively small 
$k$-sparse moments (see \Cref{lem:bounded_moments_transfers}). 
As most elements of $S_\text{g}$ are adjacent in $T'$ to most other elements of $S_\text{g}$, 
it is not hard to see that this new graph 
will have relatively large cliques; 
unfortunately, finding them may still be computationally difficult.

To find these large cliques efficiently, we need one final observation. 
If $v$ is a random vertex of a graph $G$, 
then there will (on average) be very few pairs of neighbors, 
$u$ and $w$, of $v$ so that $u$ and $w$ do not have 
a large number of common neighbors 
with each other in $G$ (see \Cref{lem:overlap_dense_subgraphs}). 
This means that if we pick a random sample $x$ in $S$, 
then most pairs of neighbors of $x$ in $T'$ are neighbors in $H$. 
Using a densification procedure (see \Cref{lem:overlap_prune_clique}), 
it is not hard to find a large subset $S'$ of these neighbors 
so that any two elements of $S'$ have many neighbors in common in $G$. 
By~\Cref{lem:bounded_moments_transfers}, 
this implies that $S'$ will in fact have bounded $k$-sparse moments.
Furthermore, if we happened to pick $x$ from $S_\text{g}$, 
it is not hard to see that it is likely that $S'$ 
has large overlap with $S_\text{g}$, and thus its mean 
provides us with a good estimate.

While the above describes our algorithmic approach, 
we also need to consider the sample complexity of our method. 
We know that the distribution $D$ has the property 
of bounded moments in $k$-sparse directions, 
and our algorithm requires that the property 
is also satisfied by the uniform distribution 
over the samples $S_\text{g}$. 
In order for $S_\text{g}$ to have it as well, it suffices 
that the $t^{th}$ moment tensor of $S_\text{g}-\mu$ 
be close to the corresponding moment tensor of $D-\mu$. 
It turns out that if these moment tensors 
are $\delta k^{-t}$-close coordinate-wise, 
which takes $O(t \log(n) k^{O(t)}/\delta^2)$ samples, 
this suffices to get the kind of certifiable concentration we require. 
This is captured by \Cref{lem:pop-to-empirical}, a restatement 
from~\cite{DKKPP22}.

\section{Preliminaries} \label{sec:prelims}

\paragraph{Basic Notation} 
 We use $\N$ to denote natural numbers and $\Z_+$ to denote positive integers. For $n \in \Z_+$ we  denote $[n] := \{1,\ldots,n\}$. 
We denote by $\R[x_1,\ldots, x_n]_{\leq d}$ the class of real-valued polynomials of degree at most $d$ in variables $x_1,\ldots, x_n$.
 We use $\poly(\cdot)$ to indicate a quantity that is polynomial in its arguments. 
 For an ordered set of variables $Q = \{x_1, \dots, x_n\}$, we will denote $p(Q)$ to mean $p(x_1, \dots, x_n)$. Throughout the paper, we will typically use the letter $n$ for the dimension, $m$ for the number of samples, $d$ for the degrees of the SoS proofs, and $t$ for the number of bounded moments.
 
\paragraph{Linear Algebra Notation} We use $I_n$ to denote the $n\times n$ identity matrix. We will drop the subscript when it is clear from the context.
We typically use small case letters for deterministic vectors and scalars. We will specify the dimensionality unless it is clear from the context. 
We denote by $e_1,\ldots,e_n$ the vectors of the standard orthonormal basis, i.e., the $j$-th coordinate of $e_i$ is equal to $\mathbf{1}_{\{i=j \}}$, for $i,j \in [n]$.
 For a vector $v$, we let $\|v\|_2$ denote its $\ell_2$-norm. We call a vector $k$-sparse if it has at most $k$ non-zero coordinates. 
 We use $\langle v,u \rangle$ for the inner product of the vectors $u,v$.
We will use $\cdot ^{\otimes s}$ to denote the standard Kronecker product.

\paragraph{Probability Notation} We use capital letters for random variables. For a random variable $X$, we use $\E[X]$ for its expectation.	
We use $\cN(\mu,\Sigma)$ to denote the Gaussian distribution with mean $\mu$ and covariance matrix $\Sigma$. We let $\phi$ denote the pdf of the one-dimensional standard Gaussian.
When $D$ is a distribution, we use $X \sim D$ to denote that the random variable $X$ is distributed according to $D$. When $S$ is a set, we let $\E_{X \sim S}[\cdot]$ denote the expectation under the uniform distribution over $S$. For any sequence $a_1, \dots, a_m \in \mathbb{R}^n$, we will also use $\E_{i \sim [m]}[a_i]$ to denote $\frac 1 m \sum_{i \in [m]} a_i$. 
For a real-valued random variable $X$ and $p \geq 1$, we use $\|X\|_{L_p}$ to denotes its $L_p$ norm, i.e., $\|X\|_{L_p} := (\E[|X|^p])^{1/p}$.
For a unit vector $v \in \R^d$ and a distribution $P$ over $\R^d$ with mean $\mu$, we define the $i$-th moment of $P$ in the direction $v$ to be $\E[|\langle v, X - \mu \rangle|^i ]$.

\medskip

\begin{restatable}[(2, $k$)-norm]{definition}{TWOKNORM}
We define the $(2, k)$-norm of a vector $x$, denoted as $\|x\|_{2,k}$, to be the maximum correlation with any $k$-sparse unit vector, i.e., 
$\Norm{x}_{2,k} := \max_{\Norm{v}_2=1, v: k\mathrm{-sparse}} \iprod{v, x}$. 
\end{restatable}

The following standard fact translates bounds from the $(2,k)$-norm to the usual $\ell_2$-norm when the underlying mean $\mu$ is  $k$-sparse (see, e.g., \cite{DKKPP22} for a proof):
\begin{restatable}{fact}{FactTruncSparse} 
	\label{fact:sparseTruncation}
	Let $h_k: \R^n \to \R^n$ denote the function where $h_k(x)$ is defined to truncate $x$ to its $k$ largest coordinates in magnitude and zero out the rest. For all $\mu \in \R^n$ that are $k$-sparse, we have that $\|h_k(x) - \mu\|_2 \leq 3 \|x-\mu\|_{2,k}$.
\end{restatable}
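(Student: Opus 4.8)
The plan is to reduce everything to the residual vector $r := x - \mu$ and exploit the elementary identity $\|r\|_{2,k} = \max_{|U| \le k} \|r_U\|_2$ (where $r_U$ denotes $r$ restricted to the coordinates in $U$), which follows from Cauchy--Schwarz once the support of the optimizing $v$ is fixed. In particular $\|r_U\|_2 \le \|r\|_{2,k}$ for every index set $U$ with $|U| \le k$. Let $A := \supp(\mu)$, which has size at most $k$, and let $B := \supp(h_k(x))$, the set of (at most $k$) largest-magnitude coordinates of $x$. Since $h_k(x)$ agrees with $x$ on $B$ and vanishes off $B$, while $\mu$ is supported on $A$, splitting $\R^n$ into the coordinates in $B$ and those outside $B$ gives
\[
\|h_k(x) - \mu\|_2^2 = \|r_B\|_2^2 + \|\mu_{A \setminus B}\|_2^2 .
\]
The first term is at most $\|r\|_{2,k}^2$ since $|B| \le k$. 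For the second, write $\mu = x - r$ and apply the triangle inequality: $\|\mu_{A\setminus B}\|_2 \le \|x_{A\setminus B}\|_2 + \|r_{A\setminus B}\|_2 \le \|x_{A\setminus B}\|_2 + \|r\|_{2,k}$, again using $|A\setminus B| \le k$.

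The remaining task is to control $\|x_{A\setminus B}\|_2$, and this is the one step that uses the specific structure of $B$; I expect it to be the only place requiring genuine care. The claim is $\|x_{A\setminus B}\|_2 \le \|x_{B\setminus A}\|_2$. When $|B| = k$, every coordinate in $B\setminus A \subseteq B$ has magnitude at least that of any coordinate in $B^c \supseteq A\setminus B$ (by the definition of $B$ as the top-$k$ coordinates of $x$), and $|B\setminus A| = k - |A\cap B| \ge |A| - |A\cap B| = |A\setminus B|$, so each coordinate of $A\setminus B$ can be injectively matched to a coordinate of $B\setminus A$ of no-smaller magnitude, which yields the inequality coordinate by coordinate. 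The corner case $|B| < k$ (i.e.\ $x$ has fewer than $k$ nonzero coordinates) forces $B = \supp(x)$, hence $A\setminus B \subseteq \supp(x)^c$ and the left-hand side is $0$. Finally, since $B\setminus A$ is disjoint from $\supp(\mu)$, we have $x_{B\setminus A} = r_{B\setminus A}$, so $\|x_{A\setminus B}\|_2 \le \|r_{B\setminus A}\|_2 \le \|r\|_{2,k}$.

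Combining the three bounds gives $\|h_k(x) - \mu\|_2^2 \le \|r\|_{2,k}^2 + (2\|r\|_{2,k})^2 = 5\|r\|_{2,k}^2$, so $\|h_k(x)-\mu\|_2 \le \sqrt 5\,\|x-\mu\|_{2,k} \le 3\,\|x-\mu\|_{2,k}$, which is the stated inequality (with room to spare). In summary, the proof is: pass to the residual, decompose along $B$ and $B^c$, bound each piece by $\|r\|_{2,k}$ via the sparsity of $A$ and $B$, and use the "top-$k$ dominates its complement" matching argument for the single term $\|x_{A\setminus B}\|_2$; everything else is bookkeeping with supports and the triangle inequality.
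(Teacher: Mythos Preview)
Your proof is correct and self-contained; the paper itself does not prove this fact but defers to \cite{DKKPP22}, so there is no in-paper argument to compare against. Your decomposition along $B$ and $B^c$ together with the matching argument for $\|x_{A\setminus B}\|_2 \le \|x_{B\setminus A}\|_2 = \|r_{B\setminus A}\|_2$ is the standard route, and it in fact yields the sharper constant $\sqrt{5}$ rather than $3$.
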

 
\subsection{SoS Preliminaries}
The following notation and preliminaries are specific to the SoS part of this paper.  We refer to \cite{BarakSteurerNotes} for a more complete treatment of the SoS framework. Here, we review the basics.

\begin{definition}[Symbolic Polynomial]
	A degree-$d$ symbolic polynomial $p$ is a collection of indeterminates $\widehat{p}(\alpha)$, 
	one for each multiset $\alpha \subseteq [n]$ of size at most $d$.
	We think of it as representing a polynomial $p \, : \, \R^n \rightarrow \R$ whose coefficients are themselves 
	indeterminates via $p(x) = \sum_{\alpha \subseteq [n], |\alpha| \leq t} \widehat{p}(\alpha) x^\alpha$.
\end{definition}

\begin{definition}[SoS Proof]\label{def:sos-proof}
	Let $x_1,\ldots,x_n$ be indeterminates and let $\cA$ be a set of polynomial equalities $\{ p_1(x) = 0,\ldots,p_m(x) = 0 \}$.
	An SoS proof of the inequality $r(x) \geq 0$ from axioms $\cA$ is a set of polynomials $\{r_i(x)\}_{i \in [m]} \cup \{r_0(x)\}$ such that $r_0$ is a sum of square polynomial and $r_{i}$'s are arbitrary, and $r(x) = r_0(x) + \sum_{i \in [m]} r_i(x)  p_i(x).$
	If the set of polynomials $\{ r_i(x) \cdot p_i(x) \mid i \in [d]\}\cup \{r_0(x)\}$ have degree at most $d$, we say that this proof is of degree $d$ and denote it by $\cA \sststile{d}{} r(x) \geq 0$.  
	When we need to emphasize what indeterminates are involved in a particular SoS proof, we denote it by  $\cA \sststile{d}{x} r(x) \geq 0$.
	When $\cA$ is empty, we omit it, e.g., $\sststile{d}{} r(x) \geq 0$ or $\sststile{d}{x} r(x) \geq 0$.
\end{definition}

We will also use the objects called \emph{pseudoexpectations}.
\begin{definition}[Pseudoexpectation]
	Let $x_1,\ldots,x_n$ be indeterminates.
	A degree-$d$ pseudoexpectation $\pE$ is a linear map $\pE  : \R[x_1,\ldots,x_n]_{\leq d} \rightarrow \R$
	from degree-$d$ polynomials to $\R$ such that $\pE \Brac{p(x)^2} \geq 0$ for any $p$ of degree at most $d/2$ and $\pE \Brac{1} = 1$.
	If $\cA = \{p_1(x) = 0, \ldots, p_m(x) = 0\}$ is a set of polynomial inequalities, we say that a pseudoexpectation $\pE$ satisfies $\cA$ if for every $i \in [m]$,
$\pE [s(x)  p_i(x)] = 0$  
	for all polynomials $s(x)$ such that $s(x)  p_i(x)$ has degree at most $d$. 
\end{definition}

It is well known (see, e.g.,~\cite{BarakSteurerNotes}) that pseudoexpectations are dual objects to SoS proofs in the following sense: given a set $\cP$ of $r$ polynomial  equalities in $n$ variables and a polynomial $q(x_1,\ldots,x_n)$, either there exists an SoS proof $\cP \sststile{\ell}{x} q(x)\geq 0$  or there exists a pseudoexpectation $\pE$ of degree $\ell$ satisfying $\cP$ but having $\pE[q(x)] < 0$. More importantly, there is an algorithm that runs
in time $(rn)^{O(\ell)}$ and finds that pseudoexpectation when we are in the second case.

\begin{theorem}[The SoS Algorithm \cite{Sho87,lasserre2001new, nesterov2000squared,bomze1998standard}] \label{thm:sos-algo}
For any $n,r, \ell \in \Z^+$, and a set of $r$ polynomial equalities $\cP=\{p_1(x_1,\ldots, x_n)=0,\ldots, p_r(x_1,\ldots, x_n)=0\}$, the following set has an $(rn)^{O(\ell)}$-time weak separation oracle (in the sense of~\cite{GLS81}):
\begin{align*}
 \{ q(x_1,\ldots, x_n) : \cP \sststile{\ell}{x_1,\ldots, x_n}  q(x_1,\ldots, x_n) \geq 0 \}   
\end{align*}
\end{theorem}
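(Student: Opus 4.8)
The plan is to exhibit the claimed oracle by identifying the cone $\Lambda_\ell := \{ q \,:\, \cP \sststile{\ell}{x} q(x) \ge 0 \}$ with (the closure of) the dual of an explicit spectrahedral cone of pseudoexpectations, and then running the ellipsoid method against an essentially trivial semidefinite separation oracle for the latter; the classical analyses of \cite{Sho87,lasserre2001new,nesterov2000squared,bomze1998standard} are exactly this argument, so the work is to lay it out with the stated complexity.

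\textbf{Step 1 (ambient space and duality).} Since a degree-$\ell$ SoS proof writes $q = r_0 + \sum_i r_i p_i$ with every summand of degree $\le \ell$, each $q \in \Lambda_\ell$ lies in the space $V := \R[x_1,\dots,x_n]_{\le \ell}$, of dimension $N := \binom{n+\ell}{\ell} \le (n+1)^\ell = n^{O(\ell)}$, and $\Lambda_\ell$ is a convex cone in $V \cong \R^N$ (closed under nonnegative scaling and under addition of proofs, but not necessarily topologically closed). Let $\cK$ denote the set of degree-$\ell$ pseudoexpectations satisfying $\cP$, viewed as a subset of $V^\ast \cong \R^N$ via the pseudomoment vector $(\pE[x^\alpha])_{|\alpha|\le\ell}$. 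For any $\pE \in \cK$ and $q \in \Lambda_\ell$, applying $\pE$ to $q = r_0 + \sum_i r_i p_i$ gives $\pE[q] = \pE[r_0] + \sum_i \pE[r_i p_i] \ge 0$, since $\pE[r_0]\ge 0$ ($r_0$ a sum of squares of degree $\le \ell/2$) and $\pE[r_i p_i]=0$ ($\pE$ satisfies $\cP$). Hence $\Lambda_\ell \subseteq \{q : \pE[q]\ge 0 \ \forall \pE\in\cK\}$, and the reverse inclusion holds up to topological closure by semidefinite-programming duality applied to the feasibility program that defines membership in $\Lambda_\ell$. Because a weak separation oracle for a convex set and for its closure are interchangeable, it suffices to build one for $\overline{\Lambda_\ell}=\{q : \pE[q]\ge 0\ \forall\pE\in\cK\}$, and the object returned in the ``reject'' case will be a pseudoexpectation.

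\textbf{Step 2 (efficient separation on the pseudoexpectation side).} The set $\cK$ is a spectrahedron: a pseudomoment vector lies in $\cK$ iff (i) $\pE[1]=1$ and $\pE[s\,p_i]=0$ for every monomial $s$ with $\deg(s\,p_i)\le\ell$ — at most $(rn)^{O(\ell)}$ linear equalities of polynomial bit-size — and (ii) the pseudomoment matrix $M(\pE)$, indexed by monomials of degree $\le \ell/2$ and of dimension $\binom{n+\lfloor\ell/2\rfloor}{\lfloor\ell/2\rfloor}\le (n+1)^{\ell/2}$, is positive semidefinite. Given a candidate vector one checks (i) directly and checks (ii) by an eigenvalue computation on $M$; a negative eigenvalue with eigenvector $u$ yields the square $(u^\top v(x))^2$ (where $v$ lists the monomials of degree $\le\ell/2$) whose pseudoexpectation under the candidate is negative, hence a separating hyperplane for $\cK$. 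Each call costs $(rn)^{O(\ell)}$ operations on numbers of polynomial bit-size. Since $\cK$ may be unbounded (it contains point evaluations at arbitrarily large $x$), we work with $\cK_R := \cK\cap B(0,R)$ for a radius $R$ that is at most exponential in the input size; a quantitative ``margin'' lemma shows $\cK_R$ still witnesses every violation: if $q\notin\overline{\Lambda_\ell}$ then some $\pE\in\cK$ has $\pE[q]<0$, and a rescaling/continuity argument produces such a witness inside $\cK_R$ whose value $\pE[q]$ is bounded away from $0$ by an amount polynomially related to $\mathrm{dist}(q,\overline{\Lambda_\ell})$.

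\textbf{Step 3 (assembling the oracle).} On input $q\in V$ and tolerance $\epsilon>0$, run the ellipsoid method to minimize $\pE\mapsto\pE[q]$ over $\cK_R$ using the Step-2 oracle. The dimension is $N=n^{O(\ell)}$ and $\log R$ is polynomial in the input size, so the method halts in $(rn)^{O(\ell)}$ iterations, each an oracle call, returning a value $\beta$ and a near-optimal, near-feasible $\pE^\star$ with $|\beta - \min_{\pE\in\cK_R}\pE[q]|\le\epsilon$. If $\beta\ge-\epsilon$ we assert $q\in\Lambda_\ell$; this is correct within the weak tolerance since, by Step 2, $q$ being $\epsilon$-nonnegative on $\cK_R$ forces $\mathrm{dist}(q,\overline{\Lambda_\ell})=O(\poly(\epsilon))$. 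If $\beta<-\epsilon$ we return the (normalized) hyperplane given by $\pE^\star$: for every $q'\in\Lambda_\ell$ with $\|q'\|\le1/\epsilon$ we have $\pE^\star[q']\ge-O(\epsilon)$ by near-feasibility together with the easy direction of Step 1, while $\pE^\star[q]<-\epsilon$, so $\pE^\star$ weakly separates $q$ from $\Lambda_\ell$. This matches the definition of a weak separation oracle of \cite{GLS81}, with total running time $(rn)^{O(\ell)}$.

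\textbf{Main obstacle.} All the delicate points concern the failure of exactness: $\Lambda_\ell$ need not be closed and the defining SDPs need not satisfy Slater's condition, so ``$q\notin\Lambda_\ell \Rightarrow \exists\pE\in\cK:\pE[q]<0$'' can only be asserted approximately, and $\cK$ is genuinely unbounded. The technical heart is therefore the quantitative margin lemma of Step 2 — turning a distance-$\epsilon$ membership failure into a bounded-norm pseudoexpectation in $\cK_R$ certifying a $\poly(\epsilon)$-sized violation — together with the routine but fussy bit-complexity bookkeeping of the ellipsoid method (rational approximation of eigenvectors, of the iterates, and of $R$ and $\epsilon$). Everything else is direct linear algebra and counting of monomials.
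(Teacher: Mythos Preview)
The paper does not prove this theorem at all: it is stated as a known result with citations to \cite{Sho87,lasserre2001new,nesterov2000squared,bomze1998standard} and is used as a black box (specifically in \Cref{line:pe,line:pe2} of \Cref{alg:sos_filter_list_decode}). There is therefore no ``paper's own proof'' to compare against.

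That said, your sketch is the standard argument one finds in those references and in expositions such as \cite{BarakSteurerNotes}: identify degree-$\ell$ pseudoexpectations as the dual cone, exhibit them as a spectrahedron of dimension $n^{O(\ell)}$ cut out by $(rn)^{O(\ell)}$ linear constraints plus a PSD constraint on the moment matrix, and run the ellipsoid method with the obvious eigenvector-based separation oracle. You have also correctly flagged the genuine technical nuisances (non-closure of $\Lambda_\ell$, unboundedness of $\cK$, bit-complexity bookkeeping) that force the oracle to be \emph{weak} in the sense of \cite{GLS81} rather than exact. For the purposes of this paper none of that matters --- the theorem is invoked only to say that either an SoS proof exists or a violating pseudoexpectation can be found in $\poly(mn^d)$ time --- so your level of detail already exceeds what the paper requires.
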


Finally, it is standard fact that several commonly used inequalities like the triangle inequality or Cauchy-Schwartz inequality have an SoS version.

\begin{fact}[SoS Cauchy-Schwartz and H\"older (see, e.g., \cite{hopkins2018clustering})]\label{fact:sos-holder}
	Let $f_1,g_1,  \ldots, f_n, g_n$ be indeterminates.
	Then, 
	\begin{align*}
	\sststile{2}{f_1, \ldots, f_n,g_1, \ldots, g_n} \Set{ \Paren{\frac{1}{n} \sum_{i=1}^n f_i g_i }^{2} \leq \Paren{\frac{1}{n} \sum_{i=1}^n f_i^2} \Paren{\frac{1}{n} \sum_{i=1}^n g_i^2} } \;.
	\end{align*} 
\end{fact}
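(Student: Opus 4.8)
The plan is to exhibit an explicit sum-of-squares decomposition of the difference between the two sides. Concretely, I would prove the equivalent (after clearing the positive scalar $1/n^2$) statement that
\[
\Paren{\sum_{i=1}^n f_i g_i}^2 \;\le\; \Paren{\sum_{i=1}^n f_i^2}\Paren{\sum_{i=1}^n g_i^2}
\]
has a degree-$2$ SoS proof in the indeterminates $f_1,\dots,f_n,g_1,\dots,g_n$. The classical Lagrange identity already hands us the certificate: one has the polynomial identity
\[
\Paren{\sum_{i=1}^n f_i^2}\Paren{\sum_{i=1}^n g_i^2} - \Paren{\sum_{i=1}^n f_i g_i}^2 \;=\; \frac12 \sum_{i=1}^n \sum_{j=1}^n (f_i g_j - f_j g_i)^2 \,.
\]
Each summand $(f_i g_j - f_j g_i)^2$ is manifestly a square of a degree-$2$ polynomial, so the right-hand side is a sum of squares, and every term appearing has degree $4$ --- but the relevant notion of degree for an SoS proof of an inequality between degree-$2$ polynomials is that the squares $r_0$ have degree at most... wait, here the inequality $r(x)\ge 0$ itself has $r$ of degree $4$; the convention in Definition~\ref{def:sos-proof} counts the degree of the proof as the max degree of $r_0$ and the $r_i p_i$, and since there are no axioms ($\cA=\emptyset$) we only need $r=r_0$ a sum of squares, which it is. The statement in the fact is phrased with the normalization $1/n$, so after multiplying through by the positive constant $1/n^2$ the same decomposition works verbatim. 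Thus the whole proof is: state the Lagrange identity, observe the RHS is a sum of squares of degree-$2$ forms, divide by $n^2$.

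There is essentially no obstacle here; the only thing to be careful about is the degree bookkeeping, namely confirming that the notation ``$\sststile{2}{}$'' in the statement refers to the degree-$2$ SoS proof system in the sense that the \emph{squared terms being manipulated} --- the $f_i g_j - f_j g_i$ --- are degree-$2$ polynomials, which is exactly the standard convention for ``degree-$2$ SoS Cauchy--Schwarz'' as used in \cite{hopkins2018clustering}. (Equivalently: the pseudoexpectation dual would be a degree-$4$ pseudoexpectation, and the certificate has $r_0=\tfrac{1}{2n^2}\sum_{i,j}(f_ig_j-f_jg_i)^2$.) Since this is a textbook fact cited with reference, I would simply write one line pointing to the Lagrange identity and move on; if a self-contained proof were desired, expanding both sides of the displayed identity and matching coefficients is a two-line routine check that I would relegate to the reader.
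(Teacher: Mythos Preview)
The paper does not prove this fact at all; it is stated with a citation to \cite{hopkins2018clustering} and used as a black box. Your Lagrange-identity argument is the standard certificate and is correct.

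Your hesitation about the degree label is warranted and is really an inconsistency in the paper, not in your argument: by the paper's own Definition~\ref{def:sos-proof}, the degree of an axiom-free proof is the degree of the sum-of-squares polynomial $r_0$, and here $r_0=\tfrac{1}{2n^2}\sum_{i,j}(f_ig_j-f_jg_i)^2$ has degree~$4$, not~$2$. The ``$\sststile{2}{}$'' in the statement follows the alternative convention (common in the SoS-for-statistics literature, including the cited source) where the subscript records the degree of the polynomials being squared rather than the degree of $r_0$ itself. Since this fact is never invoked in the paper in a way where the distinction between degree~$2$ and degree~$4$ matters, you can safely note the identity, flag the convention, and move on exactly as you propose.
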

\begin{fact}[SoS Triangle Inequality]\label{fact:sos-triangle}
	If $k$ is a power of two, 
	$\sststile{k}{a_1, a_2, \ldots, a_n} \Set{ \left(\sum_i a_i \right)^k \leq n^k \Paren{\sum_i a_i^k} }.$ 
\end{fact}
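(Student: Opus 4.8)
The plan is to prove the inequality by dyadic induction on $\ell$, where $k = 2^\ell$, bootstrapping from the degree-$2$ Cauchy--Schwarz identity. It is cleanest to prove the slightly sharper statement $\sststile{k}{a_1,\dots,a_n} (\sum_i a_i)^k \le n^{k-1}\sum_i a_i^k$: the claimed bound with $n^k$ then follows at once, since $\sum_i a_i^k = \sum_i(a_i^{k/2})^2$ is a sum of squares (this is where $k$ even is used) and $n^k - n^{k-1}\ge 0$, so adding the trivial certificate $(n^k-n^{k-1})\sum_i a_i^k\ge 0$ upgrades $n^{k-1}$ to $n^k$. (The sharper $n^{k-1}$ form is what makes the induction close; with $n^k$ as the hypothesis the squaring step below would lose a factor of $n$.)

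For the base case $\ell = 1$ I would use the explicit SoS identity $n\sum_i a_i^2 - (\sum_i a_i)^2 = \sum_{1\le i<j\le n}(a_i-a_j)^2$, which exhibits a degree-$2$ sum-of-squares witness directly (equivalently, this is SoS Cauchy--Schwarz, cf.\ \Cref{fact:sos-holder}). Before the inductive step, I would record two standard closure properties of SoS proofs: (a) if $p$ and $q$ are each sums of squares then so is $pq$, because $s^2t^2 = (st)^2$, with degrees adding --- more generally $\sststile{d_1}{} p\ge 0$ and $\sststile{d_2}{} q\ge 0$ imply $\sststile{d_1+d_2}{} pq\ge 0$; and (b) substituting the indeterminates of an SoS proof by polynomials yields an SoS proof with degree scaled accordingly. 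I also use that chaining $\le$ through SoS is valid (add the two witness polynomials).

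For the inductive step, assume the claim for $k = 2^\ell$ and prove it for $2k$. Since $k$ is even, $X := (\sum_i a_i)^k = ((\sum_i a_i)^{k/2})^2$ and $\sum_i a_i^k$ are sums of squares of degree $k$, hence $Y := n^{k-1}\sum_i a_i^k$ is as well, and therefore $Y+X$ is SoS of degree $k$; the induction hypothesis gives that $Y - X$ is SoS of degree $k$. Multiplying, $Y^2 - X^2 = (Y-X)(Y+X)$ is SoS of degree $2k$, i.e.\ $\sststile{2k}{}(\sum_i a_i)^{2k}\le n^{2k-2}(\sum_i a_i^k)^2$. Applying the base-case identity with the substitution $a_i\mapsto a_i^k$ gives $\sststile{2k}{}(\sum_i a_i^k)^2\le n\sum_i a_i^{2k}$. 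Scaling the latter by $n^{2k-2}\ge 0$ and chaining yields $\sststile{2k}{}(\sum_i a_i)^{2k}\le n^{2k-1}\sum_i a_i^{2k}$, which is the claim for $2k$.

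I do not anticipate a serious obstacle --- this is a routine SoS manipulation --- but the one point that genuinely requires care is the inductive ``squaring'' step: passing from $Y - X\ge 0$ to $Y^2 - X^2\ge 0$ is legitimate only because both $X\ge 0$ and $Y - X\ge 0$ have SoS certificates, and the certificate for $X\ge 0$ exists precisely because $X = (\sum_i a_i)^k$ is a perfect square, i.e.\ because $k$ is a power of two (for odd exponents no such certificate exists). The remainder is just bookkeeping to confirm the degree in the final statement is $k$ as claimed.
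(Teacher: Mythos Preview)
Your argument is correct. The paper states \Cref{fact:sos-triangle} as a known fact without proof, so there is no in-paper argument to compare against; your dyadic induction via the identity $n\sum_i a_i^2 - (\sum_i a_i)^2 = \sum_{i<j}(a_i-a_j)^2$ and the squaring step $Y^2-X^2=(Y-X)(Y+X)$ is the standard route and all degree bookkeeping checks out.
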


We also require the following fact, which permits us to take square roots on both sides of an inequality, provided that the right-hand side is assuredly positive.

\begin{fact}[SoS square root, Lemma A.3 from \cite{KS17} specialized]\label{prop:unsquare}
    Let $\cA = \{p_1,\ldots,p_m \}$ be a set of axioms, all of which are polynomials in the variable $x$, let $q(x)$ be another polynomial of degree $t$ in the variable $x$, and let $M$ be a positive real number which is independent of $x$. If $\cA \sststile{d}{x} M^2 - q^2(x) \geq 0$ then $\cA \sststile{d+2t}{x} M- q(x) \geq 0$.
\end{fact}

\subsection{Certifiably Bounded Moments in $k$-Sparse Directions}
Our algorithm succeeds whenever the uncorrupted samples have \emph{certifiably bounded moments} in $k$-sparse directions, defined as in~\cite{DKKPP22}: 
\begin{definition}[$(M,t,d)$-Certifiably Bounded Moments in $k$-Sparse Directions]
	\label[definition]{def:bounded-moments-k-sparse}
	Let $Q := \{v_1,\dots$ $, v_n, z_1, \dots, z_n\}$ and $\ak$ $:= \{ z_i^2 = z_i\}_{i \in [n]}$ $ \cup \{v_i z_i = v_i\}_{i \in [n]}$ $\cup \Set{ \sum_{i=1}^n z_i = k }  \cup \Set{\sum_{i=1}^n v_i^2 = 1 }.$ For an $M>0 $ and even  $t \in \N$, a distribution $D$ with mean $\mu$ satisfies \emph{$(M,t,d)$ certifiably bounded moments in $k$-sparse directions}  
	if 
	\begin{align*}
	\ak \sststile{d}{Q} \E_{X \sim D}\Brac{\iprod{ v,  X - \mu}^{t} } \leq M \;.
	\end{align*}
\end{definition}
The definition of $\ak$ is based on the fact that a vector $v= (v_1,\dots,v_n)$ is $k$-sparse if and only if there exists $z = (z_1,\dots,z_n)$ such that $v,z$ satisfy $\ak$.

We will use the following lemma proved in~\cite{DKKPP22} to bound the number of samples it takes to certify bounded moments in $k$-sparse directions. Although this is stated for subexponential distributions, it  requires only that  the distribution has bounded $t^2 \log(n)$ moments in the standard basis directions (see \Cref{lem:basic_linf_consc-full}).
The lemma proves a bound for the square of the moment, but, given \Cref{prop:unsquare}, this implies the bound without the squares, as used in \Cref{def:bounded-moments-k-sparse}.

\begin{restatable}[\cite{DKKPP22}]{lemma}{SamplingPreserves}
\label[lemma]{lem:pop-to-empirical}
	Let $D$ be a distribution over $\mathbb{R}^n$ with mean $\mu$. Suppose that $D$ has $c$-sub-exponential tails in the standard basis directions around $\mu$ for a constant $c$ and that  $\ak \sststile{O(t)}{Q} \E_{X \sim D} \Brac{\iprod{v, X-\mu}^t}^2 \leq M^2$. 
Let $S= \{X_1, \ldots, X_m\}$ be a set of $m$ i.i.d.\ samples from $D$, $D'$ be the uniform distribution over $S$, and $\overline{\mu}:=\E_{X \sim D'}[X]$.
	If $m > C (t k (\log n))^{5 t} \max(1,M^{-2})$ for a sufficiently large constant $C$, then with probability at least $0.9$ we have the following: 
	\begin{enumerate}
	    \item $\ak \sststile{O(t)}{Q} \E_{X \sim D'} \Brac{\iprod{v, X-\overline{\mu}}^t}^2 \leq 8M^2$.
	    \item $\iprod{ v,\overline{\mu} -\mu } \leq {M^{1/t}}/{\alpha^{6/t}}$ for every $k$-sparse unit vector $v$.
	\end{enumerate}
\end{restatable}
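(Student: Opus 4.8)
Since the statement is quoted verbatim from \cite{DKKPP22}, the plan is to sketch that argument. The key reduction is to \emph{entrywise} concentration of the degree-at-most-$t$ moment tensors of the empirical distribution $D'$ around those of $D$, followed by an SoS argument exploiting the sparsity axioms $\ak$. Concretely, for a multi-index $\beta=(i_1,\dots,i_r)$ with $r\le t$, the number $\E_{X\sim D'}[\prod_{j=1}^r(X-\mu)_{i_j}]$ is an average of $m$ i.i.d.\ copies of $Y_\beta:=\prod_{j=1}^r(X-\mu)_{i_j}$. By $c$-subexponentiality of the marginals of $D$ each factor satisfies $\|(X-\mu)_{i_j}\|_{L_p}\le O(p)$, so by generalized H\"older $\|Y_\beta\|_{L_p}\le (O(tp))^{t}$, i.e.\ $Y_\beta$ is sub-Weibull of order $1/t$. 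A Bernstein-type tail bound for i.i.d.\ sums of such variables gives $|\E_{X\sim D'}[Y_\beta]-\E_{X\sim D}[Y_\beta]|\le\delta k^{-t}$ except with probability $n^{-10t}$, provided $m\ge(Ct\log n)^{O(t)}k^{2t}/\delta^2$; union-bounding over the at most $2n^t$ relevant multi-indices only inflates the exponent by an additive constant. Taking $\delta$ a suitably small $(tk\log n)^{-O(t)}$ multiple of $\min(1,M)$ makes the requirement $m\ge(Ctk\log n)^{O(t)}\max(1,M^{-2})$ (one checks the exponent $5t$ in the hypothesis suffices).

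The second conclusion then follows immediately: the $r=1$ case gives $\|\overline\mu-\mu\|_\infty\le\delta k^{-t}$, so for any $k$-sparse unit vector $v$ we get $\iprod{v,\overline\mu-\mu}\le\sqrt k\,\|\overline\mu-\mu\|_\infty$, which is below $M^{1/t}/\alpha^{6/t}$ for the stated $m$. For the first conclusion, we work in the proof system with axioms $\ak$ and set $w:=\overline\mu-\mu$. The SoS binomial identity yields
\begin{align*}
\E_{X\sim D'}[\iprod{v,X-\overline\mu}^{t}]=\sum_{j=0}^{t}\binom tj(-1)^j\iprod{v,w}^{j}\,\E_{X\sim D'}[\iprod{v,X-\mu}^{t-j}].
\end{align*}
For the $j=0$ term, $\E_{X\sim D'}[\iprod{v,X-\mu}^{t}]$ and $\E_{X\sim D}[\iprod{v,X-\mu}^{t}]$ are degree-$t$ polynomials in $v$ whose coefficients (the order-$t$ tensor entries) differ by at most $\delta k^{-t}$ each; using $v_i=v_iz_i$, $z_i^2=z_i$, and $\sum_iz_i=k$, every degree-$t$ monomial $v^\alpha$ equals $v^\alpha\prod_{i\in\supp(\alpha)}z_i$, so only $\le k^t$ monomials are ``active,'' each of SoS-magnitude $\le1$ via $\sum_iv_i^2=1$ and \Cref{fact:sos-triangle}. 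Hence $\ak\sststile{O(t)}{Q}|\E_{X\sim D'}[\iprod{v,X-\mu}^{t}]-\E_{X\sim D}[\iprod{v,X-\mu}^{t}]|\le k^t\cdot\delta k^{-t}=\delta$, so the $j=0$ term is within $\delta$ of a quantity of SoS-magnitude $\le M$ by hypothesis.

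For the terms with $j\ge1$, the same sparsification (together with SoS Cauchy--Schwarz, \Cref{fact:sos-holder}) gives $\ak\sststile{2}{Q}\iprod{v,w}\le\sqrt k\,\delta k^{-t}$ and $\ak\sststile{O(t)}{Q}\E_{X\sim D'}[\iprod{v,X-\mu}^{t-j}]\le(Ctk)^{O(t)}$, so these cross terms sum, under $\ak$, to at most $O(M)\cdot(tk\log n)^{-\Omega(t)}$, which is negligible. Combining, $\ak\sststile{O(t)}{Q}-\sqrt2\,M\le\E_{X\sim D'}[\iprod{v,X-\overline\mu}^{t}]\le\sqrt2\,M$, and multiplying the two nonnegative sides of this two-sided bound gives $\ak\sststile{O(t)}{Q}\E_{X\sim D'}[\iprod{v,X-\overline\mu}^{t}]^2\le8M^2$, which is the first conclusion.

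The delicate point is the bookkeeping in the last two steps: one must verify that the extra factor $k^{-t}$ gained in the entrywise closeness exactly offsets the (up to) $k^t$ monomials that survive the sparsity axioms --- a priori there are $n^t$ monomials in $v$, and it is precisely the axioms $z_i^2=z_i$, $v_iz_i=v_i$, $\sum_iz_i=k$ that reduce this to $k^t$ --- and that the recentering correction from $\mu$ to $\overline\mu$ is absorbed while every SoS proof stays at degree $O(t)$. Making this reduction rigorous inside bounded-degree SoS is the technical heart of the argument; see \cite{DKKPP22} for the complete proof.
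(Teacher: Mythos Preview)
The paper does not give its own proof of this lemma; it is stated as a quotation from \cite{DKKPP22}, with the only supporting material being the entrywise moment-tensor concentration bound recorded as \Cref{lem:basic_linf_consc-full} in the appendix. Your sketch is a faithful outline of that argument: first establish $\ell_\infty$-concentration of the degree-$\le t$ moment tensors of $D'$ around those of $D$ (this is exactly \Cref{lem:basic_linf_consc-full}, with subexponential tails giving $f(s)=O(s)$), then push the entrywise bound through the sparsity axioms $\ak$ to obtain an SoS inequality, and finally recenter from $\mu$ to $\overline\mu$ via the binomial expansion. The squaring step at the end (from a two-sided bound $-a\le x\le a$ to $x^2\le a^2$) is valid in SoS since $a^2-x^2=(a-x)(a+x)$ is a product of two quantities each having an SoS proof of nonnegativity from $\ak$. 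Your caveat about the bookkeeping is well placed: the precise SoS manipulation showing that $\delta k^{-t}$ entrywise closeness becomes $O(\delta)$ closeness under $\ak$ is the technical content, and the paper defers it entirely to \cite{DKKPP22}.
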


\section{Main Result: Proof of \Cref{thm:LDL-sparse}} \label[section]{sec:list-decodable-sparse-mean-estimation}
Recall our setting: we are given $\alpha \in (0, 1/2)$ and a 
multiset $S:= \{x_1, \dots, x_m \}$ 
such that an unknown subset of $\lfloor \alpha m \rfloor$ many of these points
satisfy $(M, t, d)$-certifiably bounded moments 
in $k$-sparse directions, and the remaining are arbitrary. 
The goal is to recover a candidate vector that is close 
to $\mu := \E_{X \sim D}[X]$ with probability $\Omega(\alpha)$. 
In what follows, $t$ will always be $2^\ell$ for some $\ell \in \Z_+$. 

\subsection{The SoS-based Filter}

Let $T$ be the set of \emph{pairwise differences} of all samples in $S$ 
(similarly denote by $T_{\text{g}}$
the subset of $T$ corresponding to inliers $S_{\text{g}}$). 
We would like to either detect that
the moments of $T$ are already certifiably bounded in all $k$-sparse directions 
or find a direction that violates this and filter out mostly outliers. 
Unfortunately, this kind of check is computationally infeasible, 
but it can be done efficiently if we check for moment bounds that are certified by
SoS proofs. 
This is done in \Cref{alg:sos_filter_list_decode}, which takes as input the set $T$ along with
the parameters $t,d,M$ and performs filtering until
the resulting set $T'$ has $t$-th moments bounded by $M$.

\begin{algorithm}
	\begin{algorithmic}[1]
		\Function{LDMean-SoS-Filter}{$T := \{x_1, \dots, x_m\},t, d, M$}
		\State Let $Q = \{v_1, \dots, v_n, z_1, \dots, z_n \}$ and $T'=T$.
		\While{there is no SoS proof of $\ak \sststile{d}{Q} \sum_{x \in T'} \iprod{v, x}^t \leq 6M|T|$} \label{line:pe}
		\State Find a  degree-$d$  $\pE$ on $Q$ satisfying $\ak$ and $\pE[\sum_{x \in T} \iprod{v, x}^t] > 6M|T|$. \label{line:pe2}
		\State Throw out $x \in T'$ with probability $\pE \Brac{\iprod{v, x}^t}/ \max_{x\in T'}\pE \Brac{\iprod{v, x}^t}$.
		\EndWhile 
		\State \Return $T'$
		\EndFunction
	\end{algorithmic}
	\caption{SoS-based filter for list-decodable mean estimation}
	\label{alg:sos_filter_list_decode}
\end{algorithm}
\noindent

\begin{theorem}[Filter Identifies a Subset Satisfying Bounded Moments] \label{thm:filter_list_decode_mean}
Let $T$ be a multiset of points in $\R^n$ for which there exists a subset $T_{\text{g}} \subset T$ with
	$|T_{\text{g}}| = \alpha^2 |T|$ for some $\alpha  > 0$. Furthermore assume that $T_{\text{g}}$ has zero mean and $(M,t,d)$-certifiably bounded
	moments in
	$k$-sparse directions for some $M>0, d \in \Z_+$, and even $t$. Then \Cref{alg:sos_filter_list_decode}, 
	given $T, M,t,d$, returns a subset $T' \subseteq T$ in time $\poly(mn^d)$ so that, with probability at least $2/3$:
	\begin{enumerate}
		\item For any $k$-sparse unit vector $v$, we have $\sum_{x \in T'} \iprod{v, x}^t \leq 6M|T|$.
		\item $|T'\cap T_{\text{g}}| \geq |T_{\text{g}}|/2$.
	\end{enumerate}
\end{theorem}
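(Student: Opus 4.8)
The plan is to analyze \Cref{alg:sos_filter_list_decode} as a standard iterative filter with a potential-function argument, where the potential tracks how many points of $T_{\text{g}}$ have been removed versus how many total points have been removed. First I would establish a deterministic invariant that survives every iteration: at the start of each loop iteration, if $T_{\text{g}} \subseteq T'$ (which I will argue holds throughout, at least in expectation/with the potential), then because $T_{\text{g}}$ has $(M,t,d)$-certifiably bounded moments in $k$-sparse directions and zero mean, we get $\ak \sststile{d}{Q} \sum_{x \in T_{\text{g}}} \iprod{v,x}^t = |T_{\text{g}}| \,\E_{x \sim T_{\text{g}}}[\iprod{v,x}^t] \leq |T_{\text{g}}|\, M \leq \alpha^2 M |T|$ (using $t$-th moment bound $\leq M$ from squaring the certified bound, or more carefully the certified bound on $\E[\iprod{v,X}^t]^2 \leq M^2$ composed with SoS square-root reasoning — this step needs care since the definition bounds the square of the $t$-th moment). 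Crucially, since the while-loop condition fails, there is a valid pseudoexpectation $\pE$ satisfying $\ak$ with $\pE[\sum_{x \in T} \iprod{v,x}^t] > 6M|T|$; but wait, the pseudoexpectation is over the direction variables $v$ (and $z$), not over the points, so $\pE[\iprod{v,x}^t]$ is a well-defined nonnegative number for each fixed $x$, and $\sum_{x \in T'}\pE[\iprod{v,x}^t] > 6M|T|$ while $\sum_{x \in T_{\text{g}}}\pE[\iprod{v,x}^t] \leq $ (pseudoexpectation version of the certified bound applied to $T_{\text{g}}$) $\leq 3M|T_{\text{g}}| \leq 3\alpha^2 M|T| \leq 3 M|T|$ for $\alpha \leq 1$, hence at most half of the total "score" $\sum_{x \in T'}\pE[\iprod{v,x}^t]$ comes from $T_{\text{g}}$ (since $3M|T| < \tfrac12 \cdot 6M|T|$). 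Note here the subtlety: the loop condition is stated with $\sum_{x\in T'}$ but \Cref{line:pe2} refers to $\sum_{x \in T}$ — I would read this as a typo and use $T'$ consistently; I will flag this.

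The randomized removal step throws out each $x \in T'$ independently with probability $\pE[\iprod{v,x}^t]/\max_{x \in T'}\pE[\iprod{v,x}^t]$, so the expected number of points removed from $T_{\text{g}} \cap T'$ is $\sum_{x \in T_{\text{g}} \cap T'}\pE[\iprod{v,x}^t]/Z$ where $Z = \max_{x}\pE[\iprod{v,x}^t]$, and the expected number removed overall is $\sum_{x \in T'}\pE[\iprod{v,x}^t]/Z$; by the previous paragraph the former is at most half the latter. Thus in expectation, in every iteration, we remove at least as many bad points as good points. I would set up the potential $\Phi = ($number of bad points remaining$) - 3 \cdot ($number of good points removed$)$ or similar, use the one-sided inequality on expectations together with the fact that at least one point (the argmax) is removed with probability $1$ so the algorithm terminates in at most $|T|$ iterations (giving the $\poly(mn^d)$ runtime via \Cref{thm:sos-algo} for the separation oracle / pseudoexpectation computation), and then apply a martingale / optional-stopping or a direct union-bound-free argument (e.g., a supermartingale bound, or simply that $\E[\text{good removed}] \leq \E[\text{bad removed}] \leq |T| - |T_{\text{g}}|$ over the whole run) to conclude that with probability at least $2/3$ the total number of good points removed is at most $|T_{\text{g}}|/2$. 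The simplest route: let $R_g, R_b$ be the total numbers of good, bad points removed over the whole execution; a coupling/summation argument gives $\E[R_g] \leq \E[R_b] \leq |T \setminus T_{\text{g}}|$, but that alone is too weak — I would instead track the ratio iteration-by-iteration and use the fact that $R_g$ is a sum of conditionally-bounded increments dominated by the bad increments, so that $\E[R_g] \leq \E[R_b]$ and, more importantly, via a stopped-process / Azuma-type argument, $R_g \leq |T_{\text{g}}|/2$ unless $R_b$ is correspondingly large, which cannot exceed $|T| - |T_{\text{g}}| = (1-\alpha^2)|T|$; choosing constants (the $6M|T|$ threshold vs.\ the $3M|T|$ bound gives a factor-$2$ slack) makes the failure probability at most $1/3$ by Markov on $R_g$.

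Upon termination the while-loop guarantees $\ak \sststile{d}{Q}\sum_{x \in T'}\iprod{v,x}^t \leq 6M|T|$, which is exactly conclusion (1) (an SoS proof implies the inequality holds for all $k$-sparse unit $v$, since any such $v$ together with the indicator $z$ of its support satisfies $\ak$). Conclusion (2), $|T' \cap T_{\text{g}}| \geq |T_{\text{g}}|/2$, is precisely the high-probability event $R_g \leq |T_{\text{g}}|/2$ established above. The main obstacle I anticipate is making the probabilistic bookkeeping fully rigorous: the per-iteration guarantee is only an inequality in expectation (bad removed $\geq$ good removed on average), and one must convert a long sequence of such one-sided expectation bounds into an end-to-end high-probability statement about the total good points removed, which requires either a carefully constructed supermartingale (e.g., $\Phi_i = R_b^{(i)} - R_g^{(i)}$ is a submartingale, or $2R_g^{(i)} - R_b^{(i)}$ composed with the slack is a supermartingale) together with a stopping-time argument, plus verifying termination in polynomially many steps so that \Cref{thm:sos-algo} yields the claimed runtime. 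A secondary technical point to get right is the SoS manipulation converting the certified bound on $\E[\iprod{v,X-\mu}^t]^2 \leq M^2$ (a bound on the square) into the bound $\sum_{x\in T_{\text{g}}}\iprod{v,x}^t \leq 6M|T|$ used against the pseudoexpectation — one must be careful that $M$ here plays the role that in \Cref{def:bounded-moments-k-sparse} the quantity $M^2$ played, or adjust constants accordingly; reconciling these normalizations cleanly is the kind of detail that is easy to state loosely and annoying to pin down.
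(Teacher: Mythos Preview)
Your approach is essentially the paper's, but there is a genuine quantitative gap that breaks the argument as written. You correctly compute $\sum_{x\in T_{\text{g}}\cap T'}\pE[\iprod{v,x}^t]\le O(\alpha^2 M|T|)$, but then you weaken this to ``$\le 3M|T|$ for $\alpha\le 1$'' and conclude only that the good points contribute at most half the total score. Discarding the $\alpha^2$ is fatal: since $|T_{\text{g}}|=\alpha^2|T|$ is a tiny fraction of $|T|$, knowing merely that $\E[\text{good removed}]\le\E[\text{bad removed}]$ per iteration allows the filter to wipe out all of $T_{\text{g}}$ while removing only a $2\alpha^2$-fraction of $T$. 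None of your proposed potentials built on the factor-$2$ slack (e.g.\ $R_b-R_g$ a submartingale, or Markov on $\E[R_g]\le\E[R_b]\le(1-\alpha^2)|T|$) can recover $R_g\le|T_{\text{g}}|/2$ with probability $2/3$.

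The fix is exactly to keep the $\alpha^2$: comparing $\sum_{x\in T_{\text{g}}\cap T'}\pE[\iprod{v,x}^t]\le \alpha^2 M|T|$ to $\sum_{x\in T'}\pE[\iprod{v,x}^t]>6M|T|$ gives the per-iteration ratio $\alpha^2/6$. Then either (i) a direct Markov bound, $\E[R_g]\le(\alpha^2/6)|T|$ against the threshold $|T_{\text{g}}|/2=(\alpha^2/2)|T|$, yields failure probability $\le 1/3$; or (ii) the paper's cleaner route, observing that $\Delta:=\bigl(|T_{\text{g}}\cap T'|-(\alpha^2/6)|T'|\bigr)/|T|$ is a submartingale with $\Delta/\alpha^2\in[0,1]$ and initial value $\ge 5/6$, and applying a Doob-type one-sided inequality. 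Your remaining observations are fine: the $T$ versus $T'$ in \Cref{line:pe2} is indeed a typo; termination in at most $|T|$ iterations via the argmax always being removed and the $\poly(mn^d)$ runtime via \Cref{thm:sos-algo} are correct; and the $M^2$-versus-$M$ normalization in \Cref{def:bounded-moments-k-sparse} is a genuine nuisance that the paper handles by simply writing the non-squared bound $M-\E_{X\sim T_{\text{g}}}[\iprod{v,X}^t]\ge 0$ (legitimate since $t$ is even so the moment is an SoS-nonnegative quantity).
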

\begin{proof}
	Let $Q = \{z_1, \dots, z_n, v_1, \dots, v_n\}$. 
In \Cref{alg:sos_filter_list_decode}, \Cref{line:pe,line:pe2} use the separation oracle of \Cref{thm:sos-algo} with $\cP=\ak$.
	Since $T_{\text{g}}$ has zero mean and $(M,t,d)$-certifiably bounded moments in $k$-sparse directions, we have that
	\begin{align} \ak \sststile{d}{Q} M - \E_{X \sim T_{\text{g}}}\Brac{\iprod{v, X}^{t}} \geq 0. \label{eqn:S_good-bounded-moments}
	\end{align}
	Thus, if $\ak \sststile{d}{Q} \sum_{x \in T'} \iprod{v, x}^t \leq 6M|T|$ 
	the algorithm identifies that using the separation oracle  and stops (in which case we have the 
	desired conclusion that for any $k$-sparse unit vector 
	$v$,   $\sum_{x \in T'} \iprod{v, x}^t \leq 6M|T|$). Otherwise, the separation oracle returns a degree-$d$ pseudo-expectation $\pE$
	on $Q$ satisfying $\ak$ and 
	\begin{align}\label{eq:psexp_ineq}
		\pE\Brac{-6M|T| + \sum_{x \in T'} \iprod{v, x}^t} > 0 \;,
	\end{align}
	in which case we can create a filter:  
	Using  \eqref{eqn:S_good-bounded-moments} and  
	$|T_{\text{g}}| = \alpha^2 |T|$, we have that 
	\begin{align*}
		M \geq\E_{X \sim T_{\text{g}}}\Brac{ \pE{[\iprod{v, X}^{t}]}} \geq 
		\frac{\alpha^{-2}}{|T|} \sum_{x \in T_{\text{g}} \cap T'} \pE 
		\Brac{\iprod{v, x}^{t}} \;. 
	\end{align*}
	By \eqref{eq:psexp_ineq} and linearity of $\pE$, we see that
	\begin{align*}
		\frac{\sum_{x \in T_{\text{g}} \cap T'} \pE \Brac{\iprod{v, x}^t} }{\sum_{x \in T'} \pE \Brac{\iprod{v, x}^t}} \leq \alpha^2/6 \;.
	\end{align*} 
	This means if we throw out each sample $x$ with probability 
	$\pE \Brac{\iprod{v, x}^t}/ \max_{x\in T'}\pE \Brac{\iprod{v, x}^t}$ 
	(which is indeed in $[0,1]$ since $\iprod{v,x}^t$ 
	is a square, so its pseudoexpectation is a non-negative value), 
	on average, only an $\alpha^2/6$ fraction of the points 
	that are removed will be from $T_{\text{g}}$. 
	Since the sample $x$ with the largest value of $\pE\Brac{\iprod{v,x}^t}$ 
	will always be removed, the algorithm will terminate in polynomial time.

	We now analyze the size of the set $T' \cap T_{\text{g}}$ across the iterations. 
	By the above analysis, at each step, the expected number 
	of samples thrown out from $T_{\text{g}}$ is at most $\alpha^2/6$ 
	times the expected total number of samples removed. 
	{Thus, the potential function 
		\begin{align*}
			\Delta := \frac{|T_{\text{g}} \cap T'| - (\alpha^2/6)|T'|}{|T|}
		\end{align*}
		is a submartingale.
		By definition, $0 \leq \Delta/\alpha^2 \leq 1$ holds always, 
		and initially we had $\Delta/\alpha^2 \geq 5/6$. 
		By  Doob's martingale inequality (\Cref{fact:markov_martingales} 
		applied with $t = 1/2$ to the submartingale $\Delta/\alpha^2$), 
		the probability that $\Delta/\alpha^2$ remains at least 
		$1/2$ throughout the execution of the algorithm 
		is at least $2/3$. Thus, we will have 
		$|T_{\text{g}} \cap T'| \geq (\alpha^2/2)|T| = |T_{\text{g}}| / 2$ throughout the execution.}
\end{proof}

\subsection{Identifying a Subset of Samples with Bounded Moments}

Having identified a subset $T' \subset T$ satisfying the conclusions of \Cref{thm:filter_list_decode_mean}, 
we want to extract from $T'$ a vector that is close to the original mean. 
Since the average of the set of differences is likely to be close to zero regardless of the true mean, 
we will need to use the information about the pairs that we get from $T'$ to find subsets of the original 
samples that satisfy the appropriate concentration bounds. We will need the following definition.
\begin{definition}\label[definition]{def:graph_sparse_bounded_moments}
	Let $S \subset \mathbb{R}^n$. A graph $(V, E)$ on $S$ with $V = S$ is said to have $(M,t)${-bounded} moments in $k$-sparse directions if for all $k$-sparse unit vectors $v$,
	\begin{align*}
		\frac{1}{|S|^2}\sum_{(x, y) \in E} \iprod{v, x-y}^t \leq M \;.
	\end{align*}
\end{definition}
{By the guarantee of our filter, if $T'$ is the set returned by \Cref{alg:sos_filter_list_decode},  the graph $G$ with edges $(x, y)$ for which $x-y$ or $y-x$ belongs to $T'$ will have bounded moments in the sense of \Cref{def:graph_sparse_bounded_moments}.}
If $G$ contains a clique $C$ which intersects with an $\alpha$-fraction of the target samples $C_{\text{g}}$, the following result shows that the means of $C$ and $C_{\text{g}}$ are close. 

\begin{lemma}\label{lem:clique_round_lemma}
	Let $S \subset \R^n$ and $G$ be a graph on $S$ satisfying \Cref{def:graph_sparse_bounded_moments}. 
	Let $C \subset S$ be a clique in $G$. Let $C_{\text{g}} \subset C$ be a subset with
	$|C_{\text{g}}| \geq \alpha |S|$. If $\mu_C$ and $\mu_g$ denote the means 
	of $C$ and $C_{\text{g}}$ respectively, 
	{then $\iprod{v,  \mu_C - \mu_g}^t \leq  {2M}/{\alpha^2}$ for all $k$-sparse unit vectors $v$.}
\end{lemma}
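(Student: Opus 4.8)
The plan is to fix an arbitrary $k$-sparse unit vector $v$ and derive the bound $\iprod{v,\mu_C-\mu_g}^t \le 2M/\alpha^2$ directly; since the bound will not depend on $v$, this proves the lemma. First I would rewrite the gap of the two means as an average over pairs: because $\mu_C = \E_{y\sim C}[y]$ and $\mu_g = \E_{x\sim C_{\text{g}}}[x]$, linearity of expectation gives $\iprod{v,\mu_C-\mu_g} = \E_{x\sim C_{\text{g}},\, y\sim C}\bigl[\iprod{v,y-x}\bigr]$. Since $t$ is a power of two, hence even, the map $z\mapsto z^t$ is convex, so Jensen's inequality yields
\[
\iprod{v,\mu_C-\mu_g}^t \;\le\; \E_{x\sim C_{\text{g}},\, y\sim C}\bigl[\iprod{v,y-x}^t\bigr] \;=\; \frac{1}{|C_{\text{g}}|\,|C|}\sum_{x\in C_{\text{g}}}\sum_{y\in C}\iprod{v,y-x}^t \;.
\]

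The next step is to bound this double sum using the clique structure of $C$ together with the moment bound on $G$. Since $C$ is a clique, every pair of distinct vertices of $C$ — in particular every pair in $C_{\text{g}}\times C$ — is an edge of $G$, and $\iprod{v,y-x}^t = \iprod{v,x-y}^t \ge 0$ because $t$ is even; therefore $\sum_{x\in C_{\text{g}}}\sum_{y\in C}\iprod{v,y-x}^t \le 2\sum_{(x,y)\in E}\iprod{v,x-y}^t \le 2M|S|^2$ by \Cref{def:graph_sparse_bounded_moments}. Combining this with the previous display and using $|C|\ge |C_{\text{g}}|\ge \alpha|S|$, we conclude $\iprod{v,\mu_C-\mu_g}^t \le \frac{2M|S|^2}{|C_{\text{g}}|\,|C|} \le \frac{2M|S|^2}{\alpha^2|S|^2} = \frac{2M}{\alpha^2}$.

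There is essentially no hard step here; the only points needing a little care are (i) the convexity/Jensen step, which is exactly where the evenness of $t$ enters and which could equivalently be phrased via the power-mean inequality, and (ii) keeping track of the factor of two when passing from the ordered double sum over $C_{\text{g}}\times C$ to the edge sum defining $(M,t)$-bounded moments (this factor accounts for the constant in the statement). Both are routine.
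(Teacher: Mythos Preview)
Your proof is correct and follows essentially the same approach as the paper: express $\iprod{v,\mu_C-\mu_g}$ as an average over $C_{\text{g}}\times C$, apply Jensen's inequality (using that $t$ is even), bound the resulting double sum by $2\sum_{(x,y)\in E}\iprod{v,x-y}^t \le 2M|S|^2$ via the clique property and \Cref{def:graph_sparse_bounded_moments}, and finish with $|C|\ge |C_{\text{g}}|\ge \alpha|S|$. The paper runs the same chain of inequalities in the reverse direction, but the content is identical.
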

\begin{proof} By using the fact that $t$ is even and the fact that $G$ satisfies \Cref{def:graph_sparse_bounded_moments}, we see that
	\begin{align*}
		M|S|^2 \geq  \sum_{(x, y) \in E} \iprod{v, x-y}^t \geq \frac{1}{2} \sum_{x, y \in C} \iprod{v, x-y}^t \geq \frac{1}{2} \sum_{x \in C_{\text{g}}, y \in C} \iprod{v, x-y}^t \;.    
	\end{align*}
	Using Jensen's inequality we obtain, 
	\begin{align*}
		M|S|^2 {\geq}   \sum_{x \in C_{\text{g}}, y \in C} \frac{\iprod{v, x-y}^t}{2}	{\geq} \frac{|C_{\text{g}}||C| \iprod{v, \mu_C - \mu_g}^t}{2}  {\geq} \frac{|S|^2 \alpha^2 \iprod{v, \mu_C - \mu_g}^t}{2} \;.
	\end{align*}
\qedhere
\end{proof}

Unfortunately, even the inliers might not form a clique in $G$. However, the guarantee that 
$|T' \cap T_g|\geq |T_g|/2$ implies that the inliers share many neighbors in the graph $G$. 
Thus we look at the overlap graph defined below, in the hope that this graph will be more dense.

\begin{definition}[Overlap Graph]
	Let $G = (V, E)$ be a graph and $\gamma > 0$. The \emph{overlap graph} $R_\gamma(G)$ is defined to be the graph with the vertex set $V$ where each $(x, y)$ is an edge in the graph iff $|N_G(x) \cap N_G(y)| \geq \gamma |V|$, where $N_G(x)$ denotes the neighborhood of the vertex $x$ in $G$. 
\end{definition}

The following result shows that if $G$ has bounded moments, then so does $R_\gamma (G)$.

\begin{lemma}[If $G$ Has Bounded Moments, Then $R_\gamma (G)$ Has Bounded Moments]\label{lem:bounded_moments_transfers}
	Let $S$ be a set of points and $G$ be a graph with $(M, t)$-bounded
	moments in $k$-sparse directions. 
	Then for $\gamma > 0$, $R_\gamma(G)$ has $(2 \cdot 2^t M/\gamma, t )$-bounded 
	moments in $k$-sparse directions. 
\end{lemma}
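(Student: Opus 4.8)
The goal is to bound, for every $k$-sparse unit vector $v$, the quantity $\frac{1}{|S|^2}\sum_{(x,y)\in R_\gamma(G)} \iprod{v, x-y}^t$ in terms of the corresponding bound $M$ for $G$. The natural approach is to use the "triangle-inequality-through-a-common-neighbor" idea that the technical overview foreshadowed: if $(x,y)$ is an edge of $R_\gamma(G)$, then $x$ and $y$ have at least $\gamma|S|$ common neighbors $w$ in $G$, and for each such $w$ we can write $x - y = (x - w) - (y - w)$ where both $(x,w)$ and $(y,w)$ are edges of $G$. So $\iprod{v, x-y}^t \le 2^{t-1}\paren{\iprod{v,x-w}^t + \iprod{v,y-w}^t}$ by the (SoS / convexity) triangle inequality. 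Averaging this inequality over the $\ge \gamma|S|$ common neighbors $w$, and then summing over all edges $(x,y)$ of $R_\gamma(G)$, gives
\[
\sum_{(x,y)\in R_\gamma(G)}\iprod{v,x-y}^t \;\le\; \frac{2^{t-1}}{\gamma|S|}\sum_{(x,y)\in R_\gamma(G)}\;\sum_{w\in N_G(x)\cap N_G(y)}\paren{\iprod{v,x-w}^t+\iprod{v,y-w}^t}.
\]

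**Main step: recombining the double sum.** The key bookkeeping step is to swap the order of summation. Each term $\iprod{v, x - w}^t$ on the right is indexed by a triple $(x, y, w)$ with $w$ a common neighbor of $x$ and $y$; in particular $(x,w)$ is an edge of $G$. For a fixed edge $(x,w)$ of $G$, the number of choices of $y$ making $(x,y)$ an edge of $R_\gamma(G)$ and $w \in N_G(y)$ is at most the number of $y$'s that are neighbors of $w$ in $G$, which is at most $|S|$ (trivially — we don't even need the bound on $\deg_G(w)$; $|S|$ suffices). Hence each $\iprod{v, x-w}^t$ with $(x,w)\in E(G)$ is counted at most $|S|$ times, and by symmetry the same holds for the $\iprod{v,y-w}^t$ terms. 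Therefore the right-hand side is at most
\[
\frac{2^{t-1}}{\gamma|S|}\cdot |S| \cdot 2\sum_{(x,w)\in E(G)}\iprod{v,x-w}^t \;=\; \frac{2^{t}}{\gamma}\sum_{(x,w)\in E(G)}\iprod{v,x-w}^t \;\le\; \frac{2^t}{\gamma}\,M|S|^2,
\]
using the $(M,t)$-bounded-moments hypothesis on $G$ in the last step. Dividing by $|S|^2$ yields the bound $2^t M/\gamma$, which is even slightly better than the stated $2\cdot 2^t M/\gamma$; I would just keep the factor-of-$2$ slack to absorb any double-counting of edges (e.g.\ ordered vs.\ unordered pairs, or a vertex being its own "common neighbor" under whatever convention is used for $N_G$).

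**Where the care is needed.** The proof itself is short; the only place to be careful is the combinatorial double-counting bound — specifically, making sure that when I fix $(x,w)\in E(G)$ and ask "how many triples $(x,y,w)$ contribute this term," the answer is genuinely at most $|S|$, which requires only that $y$ ranges over a subset of $V$, not that it ranges over $N_G(w)$ in an essential way; and that every $(x,y)$ edge of $R_\gamma(G)$ really does have $\ge \gamma|S|$ common neighbors, which is exactly the definition of the overlap graph. One should also double-check the orientation conventions: $\iprod{v,x-y}^t = \iprod{v,y-x}^t$ since $t$ is even, so summing over unordered edges $(x,y)$ is unambiguous, and the same evenness makes the triangle-inequality split valid. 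With the factor of $2$ of slack built in, none of these conventions affects the final constant. I do not expect any genuine obstacle here — this is a clean averaging-plus-counting argument, and the $1/\gamma$ loss is the unavoidable price of passing from "$x,y$ adjacent" to "$x,y$ have many common neighbors."
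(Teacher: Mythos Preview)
Your proposal is correct and follows essentially the same route as the paper's proof: both split $\iprod{v,x-y}^t$ via a common neighbor $w\in N_G(x)\cap N_G(y)$ using the $t$-th power triangle inequality, average over the $\geq \gamma|S|$ common neighbors, swap the order of summation, and bound the number of admissible $y$'s for a fixed edge $(x,w)\in E(G)$ crudely by $|S|$. The only difference is that you use the sharper constant $2^{t-1}$ in the triangle inequality where the paper uses $2^t$, which (as you note) is absorbed by the stated slack.
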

\begin{proof} 
Let $v$ be any arbitrary $k$-sparse unit vector.
	For any $x, y$ in $R_\gamma (G)$, the  triangle inequality implies \[ \iprod{v, (x-a) - (a-y)}^t \leq 2^t [\iprod{v, x-a}^t + \iprod{v, y-a}^t]. \] By taking a sum over all $a$ in $N_G(x) \cap N_G(y)$, we have
	\begin{align*}
		\iprod{v, x-y}^t &= \hspace{-20pt}\sum_{a \in N_G(x) \cap N_G(y)} \hspace{-4pt} \frac{\iprod{v, (x-a) - (a-y)}^t}{|N_G(x) \cap N_G(y)|} \leq (2^t/{\gamma |S|}) \hspace{-20pt} \sum_{a \in N_G(x) \cap N_G(y)}  [\iprod{v, x-a}^t + \iprod{v, y-a}^t ] \;. 
	\end{align*}
	Denote $\Gamma(\alpha,x) := \{y: \text{ neighbor of $x$ in }R_\gamma(G) \text{ and neighbor of $a$ in $G$}\} $. 
	Summing over all the edges $(x,y)$ in $R_\gamma(G)$, we have that
	\begin{align*}
		\sum_{(x,y) \in R_\gamma(G)} \iprod{v, x-y}^t &\leq \frac{2^t}{\gamma |S|} \sum_{(x,y) \in R_\gamma(G)} \sum_{a \in N_G(x) \cap N_G(y)}[\iprod{v, x-a}^t + \iprod{v, y-a}^t ] \\
		&= \frac{2 \cdot 2^t}{\gamma |S|} \sum_{(a,x) \in E} \sum_{y \in \Gamma(\alpha,x)} \iprod{v, x-a}^t \\
		&\leq \frac{2 \cdot 2^t \cdot |S|}{\gamma |S|} \sum_{(a, x) \in E} \iprod{v, a-x}^t \\
		&\leq  \frac{2 \cdot 2^t M|S|^2}{\gamma},
	\end{align*}
	where we use that $t$ is even and  $G$ has $(M,t)$-bounded moments in $k$-sparse directions.
\end{proof} 

While $R_\gamma (G)$ may not have any cliques either, it is guaranteed to have fairly dense subgraphs. 

\begin{lemma}[$R_\gamma (G)$ Has Dense Subgraphs]\label{lem:overlap_dense_subgraphs}
	Let $G = (V, E)$ be a graph and $\gamma >0$. If $x$ is a randomly selected vertex of $G$, then the expected number of pairs $y, z \in N_G(x)$ 	so that $y$ and $z$ are not neighbors in $R_\gamma(G)$ is at most $\gamma |V|^2$. 
\end{lemma}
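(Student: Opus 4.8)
The plan is to count, for a uniformly random vertex $x$ of $G$, the expected number of "bad" pairs $(y,z)$ with $y,z \in N_G(x)$ but $(y,z)$ not an edge of $R_\gamma(G)$, and show this is at most $\gamma |V|^2$. The natural approach is to swap the order of counting: instead of first fixing $x$ and then looking at pairs in its neighborhood, I would fix a pair $(y,z)$ that is a non-edge of $R_\gamma(G)$ and count how many vertices $x$ have both $y$ and $z$ as neighbors in $G$. A vertex $x$ contributes to the bad count for the pair $(y,z)$ precisely when $x \in N_G(y) \cap N_G(z)$. So the total count (summed over $x$) of bad pairs is $\sum_{(y,z) \text{ non-edge of } R_\gamma(G)} |N_G(y) \cap N_G(z)|$.

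The key observation is the definition of $R_\gamma(G)$ itself: if $(y,z)$ is \emph{not} an edge of $R_\gamma(G)$, then by definition $|N_G(y) \cap N_G(z)| < \gamma |V|$. Hence each term in the sum above is strictly less than $\gamma |V|$, and there are at most $|V|^2$ ordered pairs $(y,z)$, so the total count over all $x$ is at most $\gamma |V|^3$. Dividing by $|V|$ (the number of choices of $x$, i.e., taking the expectation over a uniformly random $x$) gives that the expected number of bad pairs in $N_G(x)$ is at most $\gamma |V|^2$, which is exactly the claim.

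I would write this out as a short double-counting computation: let $B$ denote the set of ordered (or unordered, with a factor of $2$ that is harmless) pairs $(y,z)$ that are non-edges of $R_\gamma(G)$, and note
\[
\E_{x \sim V}\bigl[\#\{(y,z) \in B : y,z \in N_G(x)\}\bigr]
= \frac{1}{|V|} \sum_{x \in V} \sum_{(y,z)\in B} \mathbf{1}[x \in N_G(y) \cap N_G(z)]
= \frac{1}{|V|} \sum_{(y,z)\in B} |N_G(y) \cap N_G(z)|
< \frac{1}{|V|} \cdot |V|^2 \cdot \gamma |V| = \gamma |V|^2,
\]
using $|B| \le |V|^2$ and $|N_G(y) \cap N_G(z)| < \gamma|V|$ for each $(y,z)\in B$. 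The only mild subtlety is bookkeeping about whether pairs are ordered or unordered and whether the pair $(x,y)$ with $y=z$ or $y=x$ should be excluded, but these are lower-order adjustments that only help (they shrink the count), so they do not affect the stated bound. There is no real obstacle here — the lemma is essentially the definitional statement of $R_\gamma(G)$ read through a counting swap; the work is purely organizing the exchange of summation order cleanly.
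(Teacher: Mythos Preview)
Your proposal is correct and follows essentially the same approach as the paper: both arguments count triples $(x,y,z)$ with $y,z\in N_G(x)$ and $(y,z)$ a non-edge of $R_\gamma(G)$, use the definition of $R_\gamma(G)$ to bound the number of common neighbors of each non-adjacent pair by $\gamma|V|$, conclude the triple count is at most $\gamma|V|^3$, and divide by $|V|$.
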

\begin{proof} 
	
	The expectation in question is $1/|V|$ times the number of triples $x, y, z \in V$ so that $y$ and
	$z$ are not neighbors in $R_\gamma (G)$, but are both neighbors of $x$ in $G$. By the definition of $R_\gamma (G)$,
	if $y$ and $z$ are not neighbors in $R_\gamma (G)$, they have at most $\gamma |V|$ common neighbors in $G$. Thus, the number of such triples is at most $\gamma |V|^3$, so the expectation in question is at most $\gamma |V|^2$. 
\end{proof} 
{As outlined above, the inliers in $R_\gamma(G)$ form a dense subgraph. The next procedure (\textsc{Pruning} in \Cref{alg:clique_create})} prunes out points from a dense subgraph (inliers in $R_\gamma (G)$ for us) to find a clique.

\begin{lemma}[Dense Subgraphs Can Be Pruned to Obtain a Clique]\label{lem:overlap_prune_clique}
	Let $G = (V, E)$ be a graph and let $W \subset V$ be a set of vertices with $|W| = \beta |V|$
	and all but $\gamma |V|^2$ pairs of vertices in $W$ are connected in $G$, for $\beta, \gamma > 0$ 
	with $\gamma \leq \beta^2/36$. There exists an algorithm (\textsc{Pruning} in \Cref{alg:clique_create})  
	that given $G, W, \beta, \gamma$ runs in polynomial time and returns a  $W' \subset W$ so that 
	$|W'| \geq |W| - (6\gamma/\beta) |V|$ and so that $|W'|$ is a clique in $R_{\beta/3}(G)$. 
\end{lemma}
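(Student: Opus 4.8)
The plan is to run a greedy/iterative pruning procedure: repeatedly remove from the current vertex set any vertex that has too many ``non-edges'' to the rest of the set, and stop once every surviving vertex has few non-edges. Concretely, starting with $W_0 = W$, at each step let $W_{i+1} = W_i \setminus \{u\}$ for some $u \in W_i$ whose number of non-neighbors inside $W_i$ exceeds a threshold $\tau := (\beta/3)|V|$; if no such vertex exists, halt and output $W' := W_i$. First I would bound how many vertices this removes. Each removed vertex witnesses at least $\tau$ distinct non-edges within the current set, and once a vertex is removed its incident non-edges are never counted again, so the total number of removed vertices is at most (total number of non-edges in $W$) $/\tau \le \gamma|V|^2 / ((\beta/3)|V|) = (3\gamma/\beta)|V|$. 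This already gives $|W'| \ge |W| - (3\gamma/\beta)|V| \ge |W| - (6\gamma/\beta)|V|$, so the size bound holds with room to spare.

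Next I would check that $W'$ is a clique in $R_{\beta/3}(G)$, i.e. that any two $x, y \in W'$ have at least $(\beta/3)|V|$ common neighbors in $G$. Since the procedure halted, every $u \in W'$ has fewer than $(\beta/3)|V|$ non-neighbors inside $W'$, hence at least $|W'| - (\beta/3)|V|$ neighbors inside $W'$. For $x, y \in W'$, the number of common neighbors in $G$ is at least $|W'| - (\text{non-neighbors of }x\text{ in }W') - (\text{non-neighbors of }y\text{ in }W') > |W'| - 2(\beta/3)|V|$. So it remains to verify $|W'| - 2(\beta/3)|V| \ge (\beta/3)|V|$, i.e. $|W'| \ge \beta|V|$. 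Here I use $|W'| \ge |W| - (6\gamma/\beta)|V| = \beta|V| - (6\gamma/\beta)|V|$ together with the hypothesis $\gamma \le \beta^2/36$, which gives $(6\gamma/\beta)|V| \le (\beta/6)|V|$, so $|W'| \ge (5\beta/6)|V| \ge \beta|V|$ — wait, that is actually $(5\beta/6)|V|$, not $\beta|V|$. So I need the slightly sharper statement $|W'| - 2(\beta/3)|V| \ge (\beta/3)|V|$, i.e. $|W'| \ge \beta|V|$; since I only have $|W'| \ge (5\beta/6)|V|$, the threshold $\tau$ should instead be chosen as $\tau := (\beta/4)|V|$ (or one tracks constants more carefully): then removed vertices number at most $\gamma|V|^2/((\beta/4)|V|) = (4\gamma/\beta)|V| \le (\beta/9)|V|$, common neighbors of $x,y$ number more than $|W'| - 2(\beta/4)|V| = |W'| - (\beta/2)|V| \ge ((8/9)\beta - (1/2)\beta)|V| \ge (\beta/3)|V|$, as needed, and $W'$ is a clique in $R_{\beta/3}(G)$. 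The choice of the internal threshold is the one tuning knob; the constant $36$ in the hypothesis is exactly what makes some such choice work.

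Running time is clearly polynomial: there are at most $|W|$ iterations, and each iteration scans the $O(|V|^2)$ pairs (or maintains non-degree counts incrementally). The main ``obstacle'' — really just the one place that needs care — is choosing the pruning threshold so that the two competing requirements are simultaneously met: it must be large enough that few vertices get removed (so $|W'|$ stays $\ge$ a constant times $\beta|V|$), yet small enough that ``no vertex exceeds the threshold'' forces enough common neighbors for the $R_{\beta/3}$ clique condition. Everything else is a direct counting argument using that non-edges are consumed at most once.
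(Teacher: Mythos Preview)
Your proposal is correct and is essentially the same iterative-pruning argument as the paper's; the only difference is bookkeeping. The paper's \textsc{Pruning} removes any $x$ with fewer than $2|W|/3$ neighbors in the current $W'$, then uses a small bootstrap (assume $|W'|\ge 5|W|/6$ throughout, verify at the end) to show each removal kills at least $|W|/6=(\beta/6)|V|$ non-edges, giving the stated $(6\gamma/\beta)|V|$ bound and the common-neighbor count $\ge 2\cdot \tfrac{2|W|}{3}-|W'|\ge |W|/3$. Your variant with a fixed non-neighbor threshold $\tau=(\beta/4)|V|$ and the direct ``each removal consumes $\ge\tau$ fresh non-edges'' accounting avoids the bootstrap and lands at $(4\gamma/\beta)|V|\le(6\gamma/\beta)|V|$ removals and $\ge(7\beta/18)|V|>(\beta/3)|V|$ common neighbors --- equally valid, arguably cleaner. (One tiny wording slip: after halting, surviving vertices have \emph{at most} $\tau$ non-neighbors, not ``fewer than''; your inequalities still go through.)
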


\begin{algorithm}[h]
	\begin{algorithmic}
		\Function{{Pruning}}{$G = (V, E), W \subset V$}
		\State Let $W' = W$
		\State  \textbf{while} {$\exists x \in W'$ that is not connected to at least $2|W|/3$ vertices in $W'$}:  Remove $x$ from $W'$ \label{line:whileloop}
		\State \Return $W'$
		\EndFunction 
	\end{algorithmic}
	\caption{Algorithm for clique creation}
	\label{alg:clique_create}
\end{algorithm}

\begin{proof} 
	In \Cref{line:whileloop} of the Algorithm, the point $x$ which is removed satisfies 
	$|N_G(x) \cap W'| < 2/3 |W|$. If we also have that $|W'| \geq 5|W|/6$ (something that we will verify later),
	the removal of $x$ decreases the number of pairs of unconnected elements in $W'$ by 
	$|W'| - |N_G(x) \cap W'| \geq |W'|-(2/3)|W|\geq  |W|/6 = (\beta/6)|V|$.
	This can happen at most $(6\gamma/\beta) |V|$ times before we run out of unconnected pairs of elements in $W'$, 
	thus $|W'| \geq |W| - (6\gamma/\beta) |V|$ upon termination.
	Also, since it holds $(6\gamma/\beta) |V| \leq (\beta/6) |V| = |W|/6$, 
	we indeed have $|W'| \geq 5|W|/6$ as claimed at the start.
	Now note that each element of $W'$	is connected to at least $2|W|/3$ other elements of $W'$ in $G$. 
	Thus any pair of elements of $W'$ have at least $|W|/3$ common neighbors,
	and thus are adjacent in $R_{\beta/3}(G)$.
\end{proof}

We are finally ready to prove our main algorithmic result on rounding. 
{The basic idea is that most of the inliers in $G$ (which are at least $\alpha$-fraction of vertices) 
are connected to many other inliers, and thus if we start with an inlier, its neighborhood will also 
contain many inliers and will be dense in the overlap graph $G'$ (\Cref{lem:overlap_dense_subgraphs}). 
	Thus, we can apply the pruning of \Cref{lem:overlap_prune_clique} 
	to obtain a large clique in the overlap graph of $G'$, which also has bounded moments by two 
	applications of  \Cref{lem:bounded_moments_transfers}. } 
\begin{theorem}[Rounding]\label{thm:rounding_list_decode}
	Let $S \subset \R^n$ and let $G=(V,E)$ be a graph with $V=S$ and $(M, t)$-bounded moments in $k$-sparse
	directions. Suppose there is a subset $S_{\text{g}} \subset S$ with $|S_{\text{g}}| \geq \alpha |S|$ and at
	least half of the pairs of points in $S_{\text{g}}$ are connected by an edge in $G$. Suppose that the
	$S_{\text{g}}$ has mean $\mu_{\text{g}}$ and $t$-th moment bounded by $M$ in $k$ sparse directions. 
	Then, there exists a randomized algorithm that given $G, S$ and $\alpha$ runs in polynomial time and returns a
	$\widehat{\mu} \in \R^n$ such that with probability
	$\Omega(\alpha)$, {for all $k$-sparse unit vectors $v$, }$ \iprod{ v, \widehat \mu - \mu_{\text{g}} }^t = O(10^tM \alpha^{-6} )$.
\end{theorem}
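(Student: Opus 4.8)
The plan is to combine the structural lemmata established above in the order they were stated, using the randomness only in the choice of a single starting vertex. First I would apply Lemma~\ref{lem:bounded_moments_transfers} to pass from $G$ to the overlap graph $G' := R_{\gamma}(G)$ for a suitable $\gamma$ that we will fix below (something like $\gamma = c\,\alpha^2$ for a small constant $c$); this guarantees $G'$ has $(2\cdot 2^t M/\gamma, t)$-bounded moments in $k$-sparse directions. Next I would observe that since at least half of the $\binom{|S_{\text{g}}|}{2}$ pairs in $S_{\text{g}}$ are edges of $G$, a typical inlier $x \in S_{\text{g}}$ has many $G$-neighbors inside $S_{\text{g}}$, and moreover any two inliers with many common $G$-neighbors inside $S_{\text{g}}$ are adjacent in $G'$; so $S_{\text{g}}$ itself is a nearly-complete subgraph of $G'$ (all but $O(\alpha^{-?})$-fraction — I will need to bound the number of non-edges among $S_{\text{g}}$ in $G'$ using a counting argument analogous to Lemma~\ref{lem:overlap_dense_subgraphs}, applied to the induced subgraph on $S_{\text{g}}$).

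The sampling step is the key to efficiency: pick $x \in S$ uniformly at random. With probability $\geq \alpha$ we have $x \in S_{\text{g}}$. Condition on this event. Let $W := N_{G'}(x)$. Because $x$ is an inlier, $W$ contains a $\Omega(\alpha)$-fraction of $S$ coming from $S_{\text{g}}$ (the inliers adjacent to $x$ in $G'$), so $|W| \geq \beta |V|$ with $\beta = \Omega(\alpha)$. By Lemma~\ref{lem:overlap_dense_subgraphs} applied to $G'$ (and Markov's inequality over the random choice of $x$), with probability $\Omega(1)$ the number of non-adjacent pairs inside $N_{G'}(x)$ in the second overlap graph $R_{\beta/3}(G')$ is at most $\gamma' |V|^2$ where $\gamma'$ can be taken small relative to $\beta^2$; wait, more carefully — I'd set up Lemma~\ref{lem:overlap_dense_subgraphs} with parameter $\gamma'$ for $G'$, get expected number of bad pairs $\leq \gamma' |V|^2$, and choose $\gamma'$ so that $\gamma' \leq \beta^2/36$ holds (needed for the next step). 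Now feed $W$, $\beta$, $\gamma'$ into Lemma~\ref{lem:overlap_prune_clique} (\textsc{Pruning}) on $G'$: it returns $W' \subset W$ that is a clique in $R_{\beta/3}(G')$ with $|W'| \geq |W| - (6\gamma'/\beta)|V| \geq \Omega(\alpha)|V|$, so in particular $W'$ still contains $\Omega(\alpha)|S|$ inliers (as long as $6\gamma'/\beta$ is a small enough fraction of $\beta$; set constants accordingly).

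Two applications of Lemma~\ref{lem:bounded_moments_transfers} now give that $R_{\beta/3}(G')$ has $(M', t)$-bounded moments in $k$-sparse directions with $M' = O(2^t/\beta)\cdot O(2^t/\gamma)\cdot M = O(4^t M / (\alpha^2 \cdot \alpha)) = O(4^t M \alpha^{-3})$ roughly. Since $W'$ is a clique in $R_{\beta/3}(G')$ and contains a subset $W' \cap S_{\text{g}}$ of size $\geq \alpha' |S|$ with $\alpha' = \Omega(\alpha)$, Lemma~\ref{lem:clique_round_lemma} (applied with the graph $R_{\beta/3}(G')$, the clique $W'$, and $C_{\text{g}} = W' \cap S_{\text{g}}$) yields $\iprod{v, \mu_{W'} - \mu_{W'\cap S_{\text{g}}}}^t \leq 2M'/\alpha'^2 = O(4^t M \alpha^{-5})$ for all $k$-sparse unit $v$. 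Finally I need to relate $\mu_{W' \cap S_{\text{g}}}$ to $\mu_{\text{g}}$: since $W' \cap S_{\text{g}}$ is a large (constant fraction) subset of $S_{\text{g}}$ and $S_{\text{g}}$ has $t$-th moment bounded by $M$ in $k$-sparse directions, the mean can shift by at most $O(M^{1/t} \alpha^{-O(1)/t})$ per direction, i.e.\ $\iprod{v,\mu_{W'\cap S_{\text{g}}} - \mu_{\text{g}}}^t = O(M \cdot \text{poly}(1/\alpha))$ — this is a standard "removing a small fraction barely moves the mean under bounded moments" argument (average of $\iprod{v,x-\mu_{\text{g}}}^t$ over the removed $O(\alpha)$-fraction is at most $M/(\text{fraction})$, then Jensen). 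Combining via the SoS/ordinary triangle inequality $\iprod{v,\hat\mu - \mu_{\text{g}}}^t \leq 2^t(\iprod{v,\hat\mu-\mu_{W'\cap S_{\text{g}}}}^t + \iprod{v,\mu_{W'\cap S_{\text{g}}}-\mu_{\text{g}}}^t)$ and bookkeeping the constants gives the claimed $O(10^t M \alpha^{-6})$, setting $\hat\mu := \mu_{W'}$.

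The main obstacle I anticipate is the careful bookkeeping of the nested parameters $\gamma, \beta, \gamma', \beta/3$ so that all the "small enough" hypotheses ($\gamma' \leq \beta^2/36$ in Lemma~\ref{lem:overlap_prune_clique}, the pruning loss $(6\gamma'/\beta)|V|$ not eating too many inliers, $W'\cap S_{\text{g}}$ staying $\Omega(\alpha)|S|$) are simultaneously satisfied while keeping the final moment bound at $O(10^t M \alpha^{-6})$ rather than something with a worse $\alpha$-dependence or an extra factor of $2^{O(t)}$. A secondary subtlety is verifying that $S_{\text{g}}$ really is dense in $G'$ — that "at least half the pairs are $G$-edges" genuinely forces most pairs of inliers to share $\gamma|S|$ common neighbors — which needs the assumption $|S_{\text{g}}| \geq \alpha|S|$ to be used quantitatively (the common neighborhood of two inliers includes all the inliers adjacent to both, and a double-counting shows few inlier pairs can fail this), and I should double-check that the constant "$1/2$" in the hypothesis is enough rather than needing $1 - o(1)$.
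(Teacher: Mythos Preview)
Your overall architecture matches the paper's proof exactly: pick a random vertex, look at its neighborhood, use Lemma~\ref{lem:overlap_dense_subgraphs} plus Markov to get density in an overlap graph, prune via Lemma~\ref{lem:overlap_prune_clique} to a clique, apply Lemma~\ref{lem:clique_round_lemma}, and finish with the bounded-moments mean-shift argument (the paper isolates this last step as a separate Lemma~\ref{lem:means_close}). The final triangle-inequality combination is also the same.

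There is, however, a concrete mismatch in your execution. You set $W:=N_{G'}(x)$ and then invoke Lemma~\ref{lem:overlap_dense_subgraphs} on $G'$; that lemma bounds the number of pairs in $N_{G'}(x)$ that are \emph{not adjacent in $R_{\gamma'}(G')$}, not the number of pairs that are not adjacent in $G'$ itself. But you then feed $W$ into \textsc{Pruning} on $G'$, whose hypothesis requires density of $W$ \emph{in $G'$}. So the precondition of Lemma~\ref{lem:overlap_prune_clique} is not established. If you repair this by pruning on $R_{\gamma'}(G')$ instead, you end up with a clique in $R_{\beta/3}(R_{\gamma'}(G'))$, which needs three applications of Lemma~\ref{lem:bounded_moments_transfers} rather than the two you budget for, and the constants no longer close at $10^t\alpha^{-6}$.

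The paper avoids this by making one simple change: take $W:=N_G(x)$ (neighbors in $G$, not in $G'$) and apply Lemma~\ref{lem:overlap_dense_subgraphs} to $G$ with parameter $\delta=\Theta(\alpha^3)$. This directly bounds the number of pairs in $W$ that are non-edges in $G'=R_\delta(G)$, so \textsc{Pruning} on $G'$ is justified and yields a clique in $R_{\beta/3}(G')$. This choice also makes your entire preliminary step (arguing that $S_{\text{g}}$ is nearly complete in $G'$) unnecessary: since half the pairs in $S_{\text{g}}$ are $G$-edges, a Markov argument over $x\in S_{\text{g}}$ gives $|N_G(x)\cap S_{\text{g}}|\ge |S_{\text{g}}|/4$ with probability at least $1/3$, and that is all that is needed. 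Your worry about whether the ``$1/2$'' hypothesis forces most inlier pairs to share $\gamma|S|$ common neighbors is legitimate (it does not, in general), but with $W=N_G(x)$ that claim is never used.

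Two smaller points. First, the overlap parameter must be $\Theta(\alpha^3)$, not $c\alpha^2$: the Markov step on bad pairs has to succeed with probability $1-O(\alpha)$ (not merely $\Omega(1)$) so that intersecting with the $\Omega(\alpha)$ event ``$x$ is a well-connected inlier'' still has probability $\Omega(\alpha)$; this costs the extra factor of $\alpha$ in $\delta$ and is what produces the final $\alpha^{-6}$. Second, for the last step you only need $|W'\cap S_{\text{g}}|\ge c|S_{\text{g}}|$ for a constant $c$ (the paper gets $1/6$), so the mean-shift bound is $O(M)$ with no $\alpha$ dependence, slightly tighter than your $O(M\cdot\mathrm{poly}(1/\alpha))$.
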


\begin{algorithm}{}
	\begin{algorithmic}[]
		\Function{Rounding}{$S, G = (V, E)$}
		\State Let $\delta = \alpha^3/4608$.
		\State Choose $x \in S$ uniformly at random, let $W = N_G(x)$, and let $G' = R_\delta(G)$\label{line:choosex}
		\State \textbf{if} {the number of pairs of points in $W$ that are not connected in $G'$ is more than $(8\delta/\alpha) |V|^2$ or if $|W| \leq (\alpha/4) |V|$} \Return FAIL \label{line:cond}
		\State \textbf{else}  Run \textsc{Pruning} on $G'$ and $W$ to obtain $W'$
		\State \Return $\E_{X \sim W'}[X]$. 
		\EndFunction
	\end{algorithmic}
	\caption{Algorithm for rounding}
	\label{alg:rounding}
\end{algorithm}

\begin{proof}
	The algorithm we consider is \textsc{Rounding} (\Cref{alg:rounding}). Let $x$ and $G'$ be as in \Cref{line:choosex}. We will claim that algorithm \textsc{Rounding} succeeds as long as the following hold:
	\begin{enumerate}
		\item $x \in S_{\text{g}}$,
		\item  $x$ has at least $S_{\text{g}}/4$ neighbors in $S_{\text{g}}$ in $G'$, and
		\item the number of pairs of neighbors of $x$ that are not neighbors in $G'$ is at most $(8\delta/\alpha)|V|^2$.
	\end{enumerate}
	First, we show that these conditions hold with probability $\Omega(\alpha)$. The first condition holds with
	probability at least $\alpha$ over the choice of $x$. Conditioned on $x \in S_{\text{g}}$, the expected number
	of non-neighbors that $x$ has in $S_{\text{g}}$ is at most
	$|S_{\text{g}}|/2$. Thus, the probability that it has more than $3|S_{\text{g}}|/4$ non-neighbors is at most $2/3$ by
	Markov's inequality. Thus, the first two conditions both
	hold with probability at least $\alpha/3$. Finally, the expected number of pairs of neighbors of $x$ that are
	non-neighbors in $G'$ is at most $\delta |V|^2$ by
	\Cref{lem:overlap_dense_subgraphs}. Thus, by Markov's inequality, there will be more than $(8\delta/\alpha)
	|V|^2$ such non-connected neighbors with probability at
	most $\alpha/8$. Combining with the	above, all three conditions hold with probability at least $\alpha/24$.

	Given these assumptions, we note that $|W| \geq |S_{\text{g}}|/4 \geq (\alpha/4) |V|$, and at most
	$(8\delta/\alpha) |V|^2$ of pairs in $W$ are not connected in $G'$. This implies that we pass the condition in
	\Cref{line:cond}.
	{We will now verify the conditions in \Cref{lem:overlap_prune_clique}. Since $\beta := |W|/|V| \geq \alpha/4$ and $\gamma$, the number of pairs of
	vertices in $W$ that are not connected in $G'$, is at most $8 \delta/\alpha = \alpha^2/576$, we have}
	$\gamma \leq \beta^2/36$, satisfying the assumptions of \Cref{lem:overlap_prune_clique}. Thus the returned $W'$ is a clique in $R_{\beta/3}(G')$ and satisfies 
	\[ 
	|W| - |W'| \leq (6\gamma/\beta) |V| \leq (48\delta/\alpha)/(\alpha/4) |V| \leq (\alpha/24) |V|. \]
	This means that 
	$|W' \cap S_{\text{g}}| \geq |S_{\text{g}}|/4 - (\alpha/24)|V| \geq |S_{\text{g}}|/6$. 
	On the other hand, we know that $G$ has $(M, t)$-bounded moments in $k$-sparse directions.
	\Cref{lem:bounded_moments_transfers} implies that $G'$ has moments bounded by $O(2^t M /\alpha^3)$. Applying the lemma once more  implies that $R_{\beta/3}(G')$ has
	moments bounded by $O(4^t M/(\alpha^3 \beta)) = O(4^t M/\alpha^4)$. 
	Since $W'$ is a clique in $R_{\beta/3}(G')$, we have by \Cref{lem:clique_round_lemma} that if $\tilde{\mu}$ is the sample mean of $S_{\text{g}} \cap W'$, then 
	\begin{align}\label{eq:almost_there}
		\iprod{ v, \widehat \mu - \tilde \mu }^t \leq O(4^t~M \alpha^{-6}) \quad \text{for all $k$-sparse unit vectors $v$}\;.
	\end{align}
	Since $|S_{\text{g}} \cap W'| \geq |S_{\text{g}}|/6$ and the $S_{\text{g}}$ has bounded $t$-th moment along
	$k$-sparse directions, we have that $\iprod{v, \tilde \mu - \mu_{\text{good}}}^t \leq O(M)$ {(see \Cref{lem:means_close} below for a proof of this fact)}.
	Combining this with \Cref{eq:almost_there} using triangle inequality completes the proof. 
	\qedhere
\end{proof}

We now state and prove \Cref{lem:means_close}. 
	\begin{lemma}\label{lem:means_close}
	Let $t \in \Z_+$ even. Let $\cU$ be a set of unit vectors in $\R^n$ and $S$ be 
	a set with $t$-th moment bounded by $M$ in the directions of $\cU$, i.e., $\E_{X \sim S}[\iprod{v,X-\E_{X \sim S}[X]}^t] \leq M$
	for all $v \in \cU$. Then for all $T \subset S$ with $|T| \geq \alpha |S|$, if we denote 
	by $\mu_S$ and $\mu_T$ the means of $S$ and $T$ respectively, we have that
		$\iprod{v,\mu_S - \mu_T}^t \leq M/\alpha $, 
	for all $v \in \cU$.
\end{lemma}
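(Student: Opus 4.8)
The plan is to prove \Cref{lem:means_close} by a direct application of Jensen's inequality together with the hypothesis that $S$ has bounded $t$-th moments along directions in $\cU$. The key point is that the mean $\mu_T$ of the subset $T$ is itself an average over $T$ of the points $X$, so $\iprod{v, \mu_T - \mu_S}$ equals $\E_{X \sim T}[\iprod{v, X - \mu_S}]$, and since $t$ is even and $x \mapsto x^t$ is convex, we get $\iprod{v, \mu_T - \mu_S}^t \leq \E_{X \sim T}[\iprod{v, X - \mu_S}^t]$.

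Next I would bound the right-hand side by relating the average over $T$ to the average over $S$. Since $T \subseteq S$ and $|T| \geq \alpha|S|$, every nonnegative quantity summed over $T$ is at most $1/\alpha$ times its average over $S$; concretely,
\begin{align*}
\E_{X \sim T}\Brac{\iprod{v, X - \mu_S}^t} = \frac{1}{|T|} \sum_{X \in T} \iprod{v, X - \mu_S}^t \leq \frac{1}{\alpha |S|} \sum_{X \in S} \iprod{v, X - \mu_S}^t = \frac{1}{\alpha} \E_{X \sim S}\Brac{\iprod{v, X - \mu_S}^t} \;,
\end{align*}
where the inequality uses that each term $\iprod{v, X - \mu_S}^t$ is nonnegative because $t$ is even. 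By the moment bound hypothesis, $\E_{X \sim S}[\iprod{v, X - \mu_S}^t] \leq M$, so combining the two displays gives $\iprod{v, \mu_T - \mu_S}^t \leq M/\alpha$ for every $v \in \cU$, which is the claim.

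There is essentially no obstacle here: the lemma is a routine convexity-plus-counting argument and requires no new ideas beyond the standard observation that subsampling a set of relative density $\alpha$ can inflate averages of nonnegative functions by at most a factor $1/\alpha$. The only thing to be slightly careful about is that the moment bound is stated around $\mu_S$ (the mean of $S$), not around $\mu_T$, which is exactly why we expand $\iprod{v, \mu_T - \mu_S}^t$ rather than trying to work with centered-at-$T$ quantities; Jensen handles the switch cleanly. I would present the two displays above essentially verbatim as the full proof.
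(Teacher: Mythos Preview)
Your proof is correct and essentially identical to the paper's: the paper chains the same three inequalities $M \geq \E_{X \sim S}[\iprod{v,X-\mu_S}^t] \geq \alpha \E_{X \sim T}[\iprod{v,X-\mu_S}^t] \geq \alpha \iprod{v,\mu_T-\mu_S}^t$, using the density-$\alpha$ subsampling bound and Jensen's inequality just as you do.
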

\begin{proof}
	Let $\mu_S := \E_{X \sim S}[X]$ and $\mu_T := \E_{X \sim T}[X]$. We have that
	\begin{align*}
		M \geq \E_{X \sim S}[\iprod{v,X-\mu_S}^t] 
		\geq \alpha \E_{X \sim T}[\iprod{v,X-\mu_S}^t] 
		\geq \alpha \iprod{v,\mu_T-\mu_S}^t \;,
	\end{align*}    
	where the last inequality uses Jensen's inequality.
\end{proof}

\subsection{Proof of \Cref{thm:LDL-sparse}}

We restate the main theorem below for convenience.
\LDLsparse*

\begin{algorithm}{}
	\begin{algorithmic}[1]
		\Function{LDSparse-Mean}{$S = \{x_1, \ldots, x_m \}, \alpha, M, t, k$}
		\State Let $C \in \Z_+$ be a large enough constant ($C>5$ suffices).
		\State Form the set $T = \{ x-y \; | \; x,y \in S \}$
		\State $T' \gets \textsc{LDMean-SoS-filter}(T,t, C t, M)$
		\State Let $G=(V,E)$ with $V = S$ and $E = \{ (x,y) : \text{$x-y$ or $y-x$ belongs in $T'$} \}$.
		\State $\widehat{\mu} \gets \textsc{Rounding}(S,G)$.
		\State Let $h_k: \R^n \to \R^n$ denote the function where $h_k(x)$ is defined to truncate $x$ to its $k$ largest coordinates in magnitude and zero out the rest.
		\State \Return $h_k(\widehat{\mu})$.
		\EndFunction
	\end{algorithmic}
	\caption{Algorithm for list-decodable sparse mean estimation.}
	\label{alg:ld_sparse_mean}
\end{algorithm}

We start with a brief sketch and provide the full proof below. Let $S$ be the $(1-\alpha)$-corrupted set of samples, and
$S_{\text{g}}$ be the inliers. Let $T := \{ x-y \mid x,y, \in S\}$ and $T_{\text{g}}$ be its part due to inliers, i.e., $T_{\text{g}} := \{ x-y \mid x,y, \in S_{\text{g}}\}$. To every
subset $T'$ of $T$, we can associate a graph $G_{T'}$ having vertices $S$ and edges between the pairs included in $T'$.
Because of \Cref{lem:pop-to-empirical}, $T_{\text{g}}$ has certifiably bounded moments in $k$-sparse directions. By
\Cref{thm:filter_list_decode_mean}, the filtering step will return a subset $T' \subset T$ that has sizable overlap with
$T_{\text{g}}$ and its graph $G_{T'}$ has bounded moments in $k$-sparse directions. Finally, By
\Cref{thm:filter_list_decode_mean}, the rounding algorithm will return a $\widehat{\mu}$ that is close to $\mu$ in all
$k$-sparse directions. This $\widehat{\mu}$ can be truncated to yield a vector close to $\mu$ in the standard $\ell_2$-norm
(\Cref{fact:sparseTruncation}).  

\begin{proof}
	Let $S$ be the $(1-\alpha)$-corrupted set of samples given as input to the algorithm and $S_{\text{g}}$ be the subset of $S$ corresponding to the inliers.
	Given $S$, construct the set of differences $T:= \{x - y \mid x, y \in S\}$. Also,  denote by $T_{\text{g}}$ the same set of corresponding to the inliers.

	For the inliers, the number of samples is large enough so that with constant probability the conclusion of
	\Cref{lem:pop-to-empirical} holds. We thus condition on this event for the rest of the proof. Its first part 
	states that $S_{\text{g}}$ has $(M', t, d)$-certifiable bounded moments in $k$-sparse directions, where $M'=8M$
	and $d=O(t)$.
	By SoS triangle inequality, $T_{\text{g}}$ (\Cref{fact:sos-triangle})  has $(2^t  M', t, O(d))$ bounded moments in $k$-sparse directions.

	Now, \Cref{thm:filter_list_decode_mean} identifies a subset $T' \subset T$ such that with probability at least $2/3$:
	\begin{enumerate}
		\item \label{item:first} For all   $k$-sparse unit vectors $v$ it holds $\sum_{x \in T'} \iprod{v, x}^t \leq 6\cdot 2^tM'|T|$.
		\item \label{item:second} We have $|T'\cap T_{\text{g}}| \geq |T_{\text{g}}|/2$.
	\end{enumerate}
	Construct the graph $G=(V,E)$ with vertex set $V=S$  and edges $(x, y)$ for every pair of $x,y$ that $x-y$ or $y-x$ is in $T'$.  
	By \Cref{item:first} above, $G$  has $(6 \cdot 2^tM',t)$-bounded moments in $k$-sparse directions. 
	By \Cref{item:second}, at least half of the pairs of points in $S_{\text{g}}$  are connected by an edge in $G$.
	Moreover, $S_{\text{g}}$ has $(M',t,d)$-certifiable bounded moments in $k$-sparse directions for $d \geq t$. 
	These are the conditions of \Cref{thm:rounding_list_decode}, thus an application of this to the graph $G$ yields that for every $k$-sparse unit vector $v$ we have that
	\begin{align*}
		\iprod{ v, \widehat \mu - \mu_{\text{g}} }^t = O(10^t M' \alpha^{-6}).
	\end{align*} 
	Also, by the second part of the conclusion of \Cref{lem:pop-to-empirical}, we have that $\iprod{v, \mu - \mu_{\text{g}}}^{t} \leq M'\alpha^{-6}$ for every
	$k$-sparse unit vector $v$. 
	Using the triangle inequality, we have that $\iprod{v, \mu - \widehat \mu}^t \leq O(20^t M'\alpha^{-6})$ for every $k$-sparse unit vector $v$. Then,
	\Cref{fact:sparseTruncation} provides a way to truncate the vector $\widehat{\mu}$ so that the result, $h_k(\widehat{\mu})$, satisfies 
	$\| h_k(\widehat{\mu}) - \mu \|_2^t = O(M'\alpha^{-6}) = O(M \alpha^{-6})$.
	Raising both sides to the power $1/t$ gives the desired claim.
\end{proof}

\section{Information-Computation Tradeoffs} \label{sec:tradeoffs}

In this section we present evidence of an
information-computation gap for our problem, that is,
we provide evidence that computationally efficient list-decoding algorithms for sparse mean estimation of distributions with bounded $t$-th moments up to
error $O(\alpha^{-c/t})$ might inherently need more samples than what is needed to get the same error by 
computationally inefficient algorithms. More specifically, we give statistical query and low-degree polynomial
testing lower bounds for list-decodable sparse mean estimation, which indicate
that the factor $k^{O(t)}$ appearing in the sample complexity of our algorithm from the previous sections might be necessary for computational efficiency.
This is to be compared with the fact that, for distributions with bounded $t$-th moments, 
it is information-theoretically possible to identify a list of $O(1/\alpha)$ candidate vectors, 
containing at least one that is within euclidean distance $O(\alpha^{-1/t})$ using $O(k \log n)/\alpha^3$ samples.
\footnote{This result is shown for the dense case in \cite{DKS18-list}; the adaptation to the sparse case
follows immediately by taking a union bound over the $\binom{n}{k}$ coordinates before applying the VC concentration inequality.} 

In the statistical query model, algorithms are allowed only to perform queries of the following kind instead of drawing samples.

\begin{restatable}[STAT Oracle]{definition}{STATDEF} \label{def:stat}
	Let $D$ be a distribution on $\R^n$. A statistical query is a bounded function $f : \R^n \to [-1,1]$. 
	For $\tau>0$, the $\mathrm{STAT}(\tau)$ oracle responds to the query $f$ with a value $v$ 
	such that $|v - \E_{X \sim D}[f(X)] | \leq \tau$.
	We call $\tau$ the tolerance of the statistical query.
\end{restatable}

The results of this section follow from a simple modification of previous work of \cite{DKS18-list}. 
We thus do not include self-contained proofs here but mention only the key differences. 
We start by formally defining the problem of list-decodable sparse mean estimation. 
Our lower bound would hold against even a weaker noise model, where the noise is i.i.d.\ from an arbitrary distribution.

\begin{problem}[List-Decodable Sparse Mean Estimation] \label{prob:search_problem}
	Fix $\rho > 0$ and $\alpha \in (0,1/2)$.
	Given access to the mixture distribution $\alpha \cN(\rho v, I_n) + (1-\alpha) B$, for some (unknown)
	$k$-sparse unit vector $v$ in $\R^n$ and some (unknown and arbitrary) distribution $B$, the goal is to
	find a list  of vectors $\cL$ with the guarantee that there exists a $u \in \cL$ such that $\|u-\E_{X \sim D}[X]\|_2 < \rho /4$.
\end{problem}

The lower bounds of this section will in fact be about the more basic hypothesis testing version of the problem.

\begin{problem}[Hypothesis Testing of List-Decodable Sparse Means] \label{prob:hypothesis_testing}
	Fix $\rho > 0$.  We define the following hypothesis testing problem:
	\begin{itemize}
		\item $H_0$: The underlying distribution is $\cN(0,I_n)$.
		\item $H_1$: The underlying distribution is $\alpha \cN(\rho v, I_n) + (1-\alpha) B$, for some unknown $k$-sparse unit vector $v$ in $\R^n$ and some unknown distribution $B$.
	\end{itemize}
\end{problem}

It is known that the two problems are related by the following reduction. The resulting algorithm 
is known to be implementable in both the statistical query and the low-degree polynomials model.

\begin{fact}[\cite{diakonikolas2021statistical}]\label{fact:reduction}
	Fix $\rho>0$ and the dimension $n \in \Z_+$. 
	Denote by $\cA$ an algorithm that, whenever given some access to the distribution $\alpha \cN(\rho v, I_n) + (1-\alpha) B$ with unknown $B,v$, it returns a list $\cL$ of candidate vectors such that there exists $u \in \cL$ with $\|u - \rho v \|_2 \leq \rho/4$. 
	Then, there exists a procedure that calls $\cA$ twice and solves the hypothesis testing
	\Cref{prob:hypothesis_testing}
	with probability at least $1 - |\cL|^2/n$. The running time of this reduction is quadratic in $|\cL|n$. 
\end{fact}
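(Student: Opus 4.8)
The plan is to give a search-to-decision reduction that exploits the symmetry group of the instance. The key point is that both \Cref{prob:search_problem} and \Cref{prob:hypothesis_testing} are invariant under the group $\cG$ of signed coordinate permutations of $\R^n$: the null $\cN(0,I_n)$ is fixed by every $\Sigma\in\cG$, and $\Sigma$ maps $k$-sparse unit vectors to $k$-sparse unit vectors. The procedure draws two independent, uniformly random $\Sigma_1,\Sigma_2\in\cG$; for each $i$ it runs $\cA$ with access to $\Sigma_i D$ (obtained by pre-composing every query, or every drawn sample, with $\Sigma_i$) to get a list $\cL_i$, and forms $\widehat{\cL}_i:=\Sigma_i^{-1}\cL_i$. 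It then discards from each list any vector whose $\ell_2$ norm is outside $[3\rho/4,5\rho/4]$, and outputs ``$H_1$'' iff some $\widehat u_1\in\widehat{\cL}_1$ and $\widehat u_2\in\widehat{\cL}_2$ satisfy $\|\widehat u_1-\widehat u_2\|_2\le\rho/2$.

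Completeness: under $H_1$, $\Sigma_i D=\alpha\cN(\rho\Sigma_i v,I_n)+(1-\alpha)(\Sigma_i)_\#B$ is again an instance of \Cref{prob:search_problem} with $k$-sparse hidden unit vector $\Sigma_i v$, so by hypothesis $\cL_i$ contains some $u_i$ with $\|u_i-\rho\Sigma_i v\|_2\le\rho/4$; then $\widehat u_i:=\Sigma_i^{-1}u_i$ obeys $\|\widehat u_i-\rho v\|_2\le\rho/4$, hence $\|\widehat u_i\|_2\in[3\rho/4,5\rho/4]$ (so it survives pruning) and, by the triangle inequality, $\|\widehat u_1-\widehat u_2\|_2\le\rho/2$, so the test outputs ``$H_1$''.

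Soundness: under $H_0$ we have $\Sigma_i D=\cN(0,I_n)$ for every $\Sigma_i$, so $\cA$'s view, and thus $\cL_i$, is independent of $\Sigma_i$; conditioning on $\cL_1,\cL_2$, each surviving $\widehat u_1$ is a uniformly random element of $\cG$ applied to a \emph{fixed} vector $u_1$ with $\|u_1\|_2\in[3\rho/4,5\rho/4]$, and similarly $\widehat u_2$, independently. Since elements of $\cG$ are orthogonal, $\|\widehat u_1-\widehat u_2\|_2\le\rho/2$ is equivalent to $\|u_1-\tau u_2\|_2\le\rho/2$ for the uniform $\tau:=\Sigma_1\Sigma_2^{-1}$, which by the parallelogram law and the norm window forces $\langle u_1/\|u_1\|_2,\ \tau\,u_2/\|u_2\|_2\rangle\ge 7/25$. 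But $\E_\tau[\langle u_1/\|u_1\|_2,\tau\,u_2/\|u_2\|_2\rangle]=0$ and $\E_\tau[\langle u_1/\|u_1\|_2,\tau\,u_2/\|u_2\|_2\rangle^2]=1/n$ for \emph{every} $u_1,u_2$ --- the random signs in $\tau$ annihilate the off-diagonal terms and the random permutation averages the diagonal to $1/n$ --- so Chebyshev bounds the per-pair collision probability by $O(1/n)$, and a union bound over the $\le|\cL|^2$ pairs gives false-positive probability $O(|\cL|^2/n)$, i.e.\ the claimed $1-|\cL|^2/n$ up to an absolute constant. Beyond the two calls to $\cA$, the procedure only forms $\le|\cL|^2$ differences of $n$-vectors and compares norms, costing $O(|\cL|^2 n)$ time; and pre-composing a $[-1,1]$-valued query (resp.\ a degree-$\ell$ test polynomial) with some $\Sigma_i\in\cG$ preserves boundedness (resp.\ degree), while the rest is combinatorial post-processing, so the tester stays within the SQ and low-degree models.

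The crux is soundness, and it is also where the dimension $n$ enters. Because $\cN(0,I_n)$ is not an $H_1$ instance, $\cA$ carries no promise there and may emit adversarially correlated vectors; the only randomization that simultaneously fixes the null law and preserves $k$-sparsity is the group $\cG$, which offers ``only'' about $n$ coordinate choices, and a vector supported on a single coordinate leaves exactly $\Theta(n)$ room to hide --- this is precisely why the per-pair collision bound is $\Theta(1/n)$ rather than the exponentially small quantity that a Haar-random rotation would give in the dense setting. The remaining work is routine bookkeeping: verifying that $\cA$'s $\rho/4$ accuracy and the $[3\rho/4,5\rho/4]$ norm window are consistent with a correlation bounded away from $0$ in the collision event (we used the constant $7/25$), and that the resulting Chebyshev constant is absorbed into the $O(\cdot)$.
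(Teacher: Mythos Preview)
Your proposal is correct and follows essentially the same approach as the paper: both use the group of signed coordinate permutations (rather than Haar rotations) precisely because this group preserves $k$-sparsity while fixing $\cN(0,I_n)$, and both reduce soundness to the variance computation $\E_\tau[\iprod{u_1,\tau u_2}]=0$, $\E_\tau[\iprod{u_1,\tau u_2}^2]=\|u_1\|_2^2\|u_2\|_2^2/n$ followed by Chebyshev and a union bound over $|\cL|^2$ pairs. Your write-up is more explicit about the norm-window pruning and the resulting correlation threshold $7/25$, and you correctly flag that this yields $1-O(|\cL|^2/n)$ rather than the exact constant in the statement, but the underlying argument is the same as the paper's.
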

\begin{proof}
	We follow the same proof strategy as \cite[Lemma 5.9]{diakonikolas2021statistical} with a crucial modification: instead of using the random rotation matrix $A$ in \cite[Algorithm 1]{diakonikolas2021statistical}, we use a special kind of rotation matrix that can only shuffle the coordinates and flip the signs (see \Cref{fact:random-shuffling} below).
	This modification is needed because the latter family of rotation matrices preserve the sparsity of vectors. 
	We obtain the desired conclusion by following the same proof as \cite{diakonikolas2021statistical} but replacing \cite[Lemma 5.10]{diakonikolas2021statistical} with  the following claim :
	\begin{fact}
		\label{fact:random-shuffling}
		Let $\sigma_1,\dots,\sigma_n$ be $n$ independent Rademacher random variables. Let $A$ be an $n\times n$ independent permutation matrix generated uniformly at random. 
		Let $A'$ be the matrix generated by multiplying the $i$-th row of $A$ by $\sigma_i$ for each $i \in [n]$, i.e, $A'_{i,j} = \sigma_i A_{i,j}$. For any fixed vectors $u$ and $v$, let $Z:= \langle u, A' v\rangle$. Then $\E[Z] = 0$ and the variance of $Z$ is $\|u\|_2^2 \|v\|_2^2/n$.
	\end{fact}
	\begin{proof}
		Let $Z = \langle u, A' v\rangle$ and observe that $Z = \sum_{i,j} \sigma_i u_iA_{i,j}v_j $. Since $\sigma_i$'s are zero mean, we have that $\E[Z] = 0$. 
		To calculate the variance, we use the following facts: (i) 
		For any $i\in[n]$, we have that $A_{i,j}A_{i,\ell} = 0$ almost surely if $j \neq l$, and (ii) $\E[A_{i,j}^2] = \E[A_{i,j}] = 1/n$.
		Using these, we obtain the following expression for the variance of $Z$.
		\begin{align*}
			\Var(Z)    &=  \E\left[\left(\sum_{i,j} \sigma_i u_i  A_{i,j} v_j\right)^2\right]  = \E\left[\sum_{i,j,k,\ell } \sigma_i \sigma_k u_iv_j u_k v_\ell  A_{i,j} A_{k,\ell }\right]\\ 
			&= \E\left[\sum_{i,j,\ell } \sigma_i^2 u_i^2v_j  v_\ell  A_{i,j} A_{i,\ell }\right]   = \E\left[\sum_{i,j }  u_i^2v_j^2   A_{i,j}^2\right]  = \sum_{i,j }  (u_i^2v_j^2)/n  = \|u\|_2^2 \|v\|_2^2/n.
		\end{align*}
		where the third equation is because $\E[\sigma_i\sigma_j] = 0$ if $i \neq j$ and the next one because of $A_{i,j}A_{i,\ell} = 0$ if $j \neq \ell$.
	\end{proof}
	Since the variance is bounded, we can apply the Chebyshev's inequality to get an upper bound on the failure probability of the reduction.
\end{proof}
We now state the SQ lower bound and sketch its proof. 

\begin{theorem}[Statistical Query Lower Bound]
	\label{thm:sq-lower-bd-formal}
	Let $k,n,t \in \Z_+$ with $k \leq \sqrt{n}$, and $c>0$ be a small enough constant. Let $\cA$ be an SQ algorithm that solves the hypothesis testing \Cref{prob:hypothesis_testing} with $\rho = c(t\alpha)^{-1/t}$. Then, $\cA$ does one of the following:
	\begin{itemize}
		\item it uses at least one query with tolerance $O\left( 2^{t/2}k^{-(t+1)/4} \exp\left( O((t\alpha)^{-2/t}) \right)\right)$ or
		\item it makes $\Omega\left(  n^{\sqrt{k}/16} k^{-(t+1)/2} \right)$ many queries.
	\end{itemize}
\end{theorem}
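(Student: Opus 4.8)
The plan is to adapt the Statistical Query lower-bound machinery of \cite{DKS18-list} for (dense) list-decodable mean estimation to the $k$-sparse setting. Concretely, I would exhibit a large family of distributions, each of which is a legitimate $H_1$ instance of \Cref{prob:hypothesis_testing}, that is SQ-indistinguishable from the null $H_0=\cN(0,I_n)$. This rests on three ingredients: (i) a one-dimensional moment-matching distribution, imported essentially verbatim from \cite{DKS18-list}; (ii) a packing of nearly-orthogonal $k$-sparse unit vectors, which is the only genuinely new component and where the hypothesis $k\le\sqrt n$ enters; and (iii) the generic SQ-dimension theorem of \cite{FGR+13}.

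First I would import the one-dimensional construction of \cite{DKS18-list}: for $\rho=c(t\alpha)^{-1/t}$ with $c$ a small constant, there is a distribution $A_\rho$ on $\R$ of the form $A_\rho=\alpha\,\cN(\rho,1)+(1-\alpha)E_\rho$ for some distribution $E_\rho$, which agrees with the standard Gaussian $\phi$ on its first $t$ moments and satisfies $\chi^2(A_\rho\,\|\,\phi)\le 2^{O(t)}\exp\!\big(O(\rho^2)\big)=2^{O(t)}\exp\!\big(O((t\alpha)^{-2/t})\big)$; the restriction $\rho\lesssim(t\alpha)^{-1/t}$ is exactly what keeps this divergence controlled. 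Next, for a $k$-sparse unit vector $v\in\R^n$ I would form the \emph{hidden-direction} distribution $P_v$ on $\R^n$ with density $P_v(x)=\cN(0,I_n)(x)\cdot A_\rho(\iprod{v,x})/\phi(\iprod{v,x})$, i.e.\ $P_v$ looks like $A_\rho$ along $v$ and like a standard Gaussian in the orthogonal complement. Since $A_\rho$ has an $\alpha\,\cN(\rho,1)$ component, $P_v$ decomposes as $\alpha\,\cN(\rho v,I_n)+(1-\alpha)B_v$ for some distribution $B_v$, so every $P_v$ is a valid instance of $H_1$. The key estimate, obtained from the Hermite expansion of $A_\rho/\phi-1$ (whose coefficients of degree $\le t$ vanish by the moment matching) exactly as in \cite{DKS18-list}, is the pairwise-correlation bound: writing $\chi_D(P,Q):=\E_{X\sim D}[(\tfrac{dP}{dD}(X)-1)(\tfrac{dQ}{dD}(X)-1)]$, for $k$-sparse unit vectors $v,v'$ one has $|\chi_{\cN(0,I_n)}(P_v,P_{v'})|\le|\iprod{v,v'}|^{\,t+1}\,\chi^2(A_\rho\,\|\,\phi)$ and $\chi_{\cN(0,I_n)}(P_v,P_v)=\chi^2(A_\rho\,\|\,\phi)$.

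The new step is the packing argument. Since $k\le\sqrt n$, a uniformly random $k$-subset of $[n]$ has expected intersection $k^2/n\le1$ with another, and a standard tail bound gives $\mathrm{Pr}[\,|S\cap S'|\ge\sqrt k\,]\le(ek^{3/2}/n)^{\sqrt k}$; a union bound over pairs then produces, for $n$ above an absolute constant, $N=n^{\sqrt k/16}$ many $k$-subsets with all pairwise intersections at most $\sqrt k$. Assigning entries $\pm1/\sqrt k$ on these supports yields $k$-sparse unit vectors $v_1,\dots,v_N$ with $|\iprod{v_i,v_j}|\le1/\sqrt k=:\gamma$ for $i\ne j$. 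Plugging $\delta:=\gamma^{t+1}\chi^2(A_\rho\|\phi)$ and $B:=\chi^2(A_\rho\|\phi)$ into the generic SQ-dimension theorem of \cite{FGR+13} (in the form used in \cite{DKS18-list}), any SQ algorithm that distinguishes $\cN(0,I_n)$ from a uniformly random $P_{v_i}$ either uses a query of tolerance $<\sqrt{3\delta}=O\!\big(2^{t/2}k^{-(t+1)/4}\exp(O((t\alpha)^{-2/t}))\big)$ or makes at least $\Omega(N\delta/B)=\Omega\!\big(n^{\sqrt k/16}k^{-(t+1)/2}\big)$ queries. Since each $P_{v_i}$ is a valid $H_1$ instance and $\cN(0,I_n)$ is the $H_0$ instance, an SQ algorithm solving \Cref{prob:hypothesis_testing} (under \Cref{def:stat}) would solve this distinguishing problem, giving the theorem.

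The bulk of the content --- the one-dimensional moment-matching distribution with controlled $\chi^2$-divergence, the Hermite-based pairwise-correlation bound, and the precise statement of the generic SQ-dimension theorem --- is off-the-shelf from \cite{DKS18-list,FGR+13}, so I do not expect a deep obstacle here. The one place demanding genuine care is the sparse packing: the hypothesis $k\le\sqrt n$ is precisely what makes random $k$-subsets near-disjoint, and one must carefully track how the incoherence $\gamma=\Theta(1/\sqrt k)$ and family size $N=n^{\Theta(\sqrt k)}$ propagate through the divergence bounds to the stated tolerance ($k^{-(t+1)/4}$) and query-count ($n^{\sqrt k/16}k^{-(t+1)/2}$) thresholds. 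A minor further point is checking that truncating the hidden directions to $\pm1/\sqrt k$ entries keeps $P_v$ a valid (light-tailed) $H_1$ instance while not perturbing the Hermite computation, which follows the same route as in \cite{DKS18-list}.
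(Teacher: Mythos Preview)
Your proposal is correct and follows essentially the same approach as the paper. The only difference is packaging: the paper compresses your steps (ii) and (iii) --- the hidden-direction construction, the packing of near-orthogonal $k$-sparse unit vectors, and the invocation of the SQ-dimension theorem --- into a single black-box citation of \cite[Corollary~6.7]{DKKPP22}, whereas you spell these out explicitly; the one-dimensional moment-matching distribution from \cite{DKS18-list} is invoked identically in both.
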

\begin{proof}
	Let $A$ be the one-dimensional distribution of \cite[Lemma 5.5]{DKS18-list}, which satisfies the following properties: (i) $A = \alpha \cN(\rho, 1) + (1-\alpha) E$ for some distribution $E$, (ii)  $A$ matches the first $t$ moments with $\cN(0,1)$, and $\chi^2(A,\cN(0,1)): = \int_{-\infty}^{+\infty} (A(x) - \phi(x))^2/\phi(x) \d x = \exp(O(t\alpha)^{-2/t})$, where $\phi(x)$ denotes the pdf of $\cN(0,1)$.
	Then, the result follows from \cite[Corollary 6.7]{DKKPP22}.
\end{proof}

Instead of using a reduction to the hypothesis testing problem, one can also obtain the same lower bound directly against the list-decodable mean estimation algorithms (i.e., search version of the problem as opposed to the decision version of the problem) by using the framework of \cite{DKS17-sq,DKS18-list}; see, for example, \Cref{thm:sqlb-informal}. The reduction to the hypothesis testing problem outlined here is provided for two reasons:  (i) it is conceptually insightful, and (ii) it allows us to show lower bounds against low-degree polynomial tests (see the remark below).

\begin{remark}
	By using the equivalence between SQ and low-degree polynomials \cite{BBHLS20}, \Cref{thm:sq-lower-bd-formal} also implies qualitatively similar lower bound holds against low-degree polynomial tests. 
	Specifically, the relevant statement for our case is obtained by using  \cite[Theorem 6.23]{DKKPP22} with $m=t$, with the following interpretation:  Unless the number of samples used  is greater than $k^{(1-c)(t+1)}/(2^{t+1}\chi^2(A,\cN(0,I_n))$, any polynomial of degree roughly up to $k^c\log n$  fails to provide a good test for the hypothesis testing problem of \Cref{prob:hypothesis_testing}. We refer to \cite{BBHLS20} for the formal definitions that quantify the notion of goodness of polynomial tests.
\end{remark}

\section{Acknowledgements}

We thank Sihan Liu for pointing out a missing fact in the previous version of the paper.

\bibliographystyle{alpha}
\bibliography{allrefs}

\newpage

\appendix

\section{Omitted Background}
\label{app:addDetails}

\subsection{Martingales}\label{sec:martingales}

\begin{definition}[Submartingale]
	A submartingale is an integer-time stochastic process $\{X_i \; | \; i \in \Z_+ \}$ that satisfies the following:
	\begin{enumerate}
		\item $\E[|X_i|] < \infty$.
		\item $\E[X_i | X_{i-1},X_{i-2}, \ldots, X_1] \geq X_{i-1}$.
	\end{enumerate}
\end{definition}

\begin{restatable}[Optimal Stopping Theorem]{fact}{OPTIMALSTOPPING} \label{fact:optimalstoping}
	Let $X_1,X_2,\ldots$ be a sub-martingale and $T$ be a finite stopping time. Then, $\E[X_T] \geq \E[X_1]$.
\end{restatable}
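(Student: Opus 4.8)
The plan is to prove the statement by the classical ``stopped process'' argument, reducing first to the case of a bounded stopping time. Concretely, for each $N \in \Z_+$ I would set $T_N := \min(T, N)$, which is again a stopping time for the natural filtration of $(X_i)$ and satisfies $T_N \le N$ almost surely. The goal then becomes to show $\E[X_{T_N}] \ge \E[X_1]$ for every $N$, after which finiteness of $T$ gives $X_{T_N} \to X_T$ almost surely, and a convergence theorem lets us pass to the limit: dominated convergence suffices whenever the submartingale is bounded (which is the case in all of our applications, where the relevant process takes values in $[0,1]$), and more generally uniform integrability of $\{X_{T_N}\}_N$ would do.

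The heart of the argument is the computation showing that the stopped process $Y_i := X_{\min(i,\, T_N)}$ has nondecreasing expectation. Integrability is immediate since $\E[|Y_i|] \le \sum_{j \le i}\E[|X_j|] < \infty$. Writing $Y_i - Y_{i-1} = (X_i - X_{i-1})\,\mathbf{1}\{T_N \ge i\}$ and observing that $\{T_N \ge i\} = \{T_N \le i-1\}^c$ is a function of $X_1,\dots,X_{i-1}$ alone (because $T_N$ is a stopping time), we get
\[
\E[Y_i \mid X_{i-1},\dots,X_1] - Y_{i-1} \;=\; \mathbf{1}\{T_N \ge i\}\,\big(\E[X_i \mid X_{i-1},\dots,X_1] - X_{i-1}\big) \;\ge\; 0,
\]
using that $(X_i)$ is a submartingale and that the indicator is nonnegative. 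Taking expectations gives $\E[Y_i] \ge \E[Y_{i-1}]$; telescoping over $i = 2,\dots,N$ yields $\E[Y_N] \ge \E[Y_1] = \E[X_1]$, and since $T_N \le N$ we have $Y_N = X_{T_N}$, giving $\E[X_{T_N}] \ge \E[X_1]$ as required.

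The only genuinely delicate point is the measurability observation in the middle step — that $\{T_N \ge i\}$ depends only on $X_1,\dots,X_{i-1}$, so that it can be factored out of the conditional expectation — together with being slightly careful in the limiting argument about exactly what regularity of $T$ (or boundedness of the $X_i$) is being invoked. Beyond that, the proof is pure bookkeeping, so I do not anticipate a real obstacle.
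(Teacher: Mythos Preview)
The paper does not actually prove this statement; it is recorded as a standard fact from probability and simply invoked in the proof of \Cref{fact:markov_martingales}. Your stopped-process argument is the textbook proof and is correct for bounded stopping times, and you are right that in this paper the only application is to a process with values in $[0,1]$, so the passage to the limit via dominated convergence is unproblematic. The one honest caveat is that the fact as stated (finite $T$, general submartingale) is slightly stronger than what your limiting step justifies without an extra integrability hypothesis; but that is a wrinkle in how the fact is phrased, not in your proof, and it does not affect anything downstream in the paper.
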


\begin{restatable}{proposition}{USEFULFACT}\label{fact:markov_martingales}
	Let $X_1,X_2,\ldots$ be a sub-martingale for which $0 \leq X_i \leq 1$ almost surely and fix an integer $n<\infty$. Then, for any $t \in (0,1)$, we have that
	\begin{align*}
		\pr\left[ \min_{1 \leq i \leq n}X_i \geq t \right] \geq \frac{\E[X_1] - t}{1-t} \;.
	\end{align*}
\end{restatable}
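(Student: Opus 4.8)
The plan is to deduce this from the Optimal Stopping Theorem (\Cref{fact:optimalstoping}) via a suitable stopping time. Define
$\tau := \min\{ i \in [n] : X_i < t \}$, with the convention that $\tau = n$ when no such index exists. Then $\tau$ takes values in $[n]$, so it is a finite stopping time, and each $X_i$ is bounded, hence integrable; thus \Cref{fact:optimalstoping} applies to the submartingale $(X_i)$ and gives $\E[X_\tau] \geq \E[X_1]$.

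Next I would upper bound $\E[X_\tau]$ by splitting on whether the running minimum dips below $t$. Set $p := \pr[\min_{1 \le i \le n} X_i \ge t]$. On the event $\{\min_i X_i \ge t\}$ (probability $p$) we only use the trivial bound $X_\tau \le 1$ coming from $X_i \le 1$ a.s.; on the complementary event $\{\min_i X_i < t\}$ the definition of $\tau$ forces $X_\tau < t$. Hence
\[
\E[X_1] \;\le\; \E[X_\tau] \;\le\; t\,(1-p) + 1 \cdot p \;=\; t + p\,(1-t).
\]
Since $1 - t > 0$, rearranging yields $p \ge (\E[X_1] - t)/(1-t)$, which is exactly the claimed inequality; note it is automatically valid (and vacuous) when $\E[X_1] \le t$, since then the right-hand side is nonpositive.

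There is essentially no serious obstacle here — the only points requiring care are: making $\tau$ well-defined and finite via the "else $\tau = n$" convention so that \Cref{fact:optimalstoping} is applicable, keeping track of the asymmetry between the strict bound $X_\tau < t$ on the "bad" event and the trivial bound $X_\tau \le 1$ on the "good" event, and observing that integrability is free from boundedness. As a remark I would also note the equivalent derivation via the nonnegative supermartingale $Y_i := 1 - X_i \in [0,1]$: since $\{\min_i X_i < t\} = \{\max_{i \le n} Y_i > 1-t\}$, Doob's maximal inequality $\pr[\max_{i\le n} Y_i \ge \lambda] \le \E[Y_1]/\lambda$ applied with $\lambda = 1-t$ gives $1 - p \le (1-\E[X_1])/(1-t)$, the same conclusion.
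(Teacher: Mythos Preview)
Your proof is correct and follows essentially the same approach as the paper: the same stopping time $\tau$ (the first index where $X_i<t$, else $n$), the optimal stopping bound $\E[X_\tau]\ge\E[X_1]$, and the ``reverse Markov'' step, which you carry out explicitly by splitting on $\{\min_i X_i\ge t\}=\{X_\tau\ge t\}$. If anything, your write-up is cleaner than the paper's (which writes $\pr[\E[X_T]<t]$ where $\pr[X_T<t]$ is meant), and your supermartingale/Doob remark is a nice alternative.
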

\begin{proof}
	Let the random variable $T$ defined as the minimum between $n$ and $\arg\min_i \{X_i < t\}$. Then $T$ is a stopping time, and it is finite. By the optimal stopping theorem, $\E[X_T] \geq \E[X_1]$. Also, for every random variable $Y \in [0,1]$ and $t\in (0,1)$, Markov's inequality implies that $\pr[ Y \geq t] \geq (\E[Y] - t)/(1-t)$. Using the two, we have that
	\begin{align*}
		\pr\left[ \min_{1 \leq i \leq n}X_i < t \right] \leq \pr\left[ \E[X_T] < t \right] < 1- \frac{\E[X_1] - t}{1-t} \;.
	\end{align*}
\end{proof}

The following lemma has its proof in~\cite{DKKPP22}. Using this, it is possible to show that $O(t^2 \log(n))$ moments being bounded in the standard basis directions is sufficient to show the concentration of the $t$-th tensors in $\ell_\infty$ norm. 

\begin{restatable}{lemma}{BadicLinfConscFull}\label{lem:basic_linf_consc-full}
	Let $D$ be a distribution over $\R^n$ with mean $\mu$. Suppose that for all $s \in [1,\infty)$, $D$ has its $s^{th}$ moment bounded by $(f(s))^s$ for some non-decreasing function $f:[1,\infty) \to \R_+$, in the direction $e_j$, i.e., suppose that for all $j \in [n]$ and $X \sim D$:
	\begin{align*}
		\|\iprod{e_j, X - \mu}\|_{L_s} \leq f(s).
	\end{align*}
	Let $X_1, \dots, X_m$ be $m$ i.i.d.\ samples from $D$ and define $\overline{\mu}:= \sum_{i=1}^m X_i$. The following are true:
	\begin{enumerate}
		\item If $m \geq C \max\left(\frac{1}{\delta^2}, 1\right) \cdot \left( t \log (n/\gamma)\right)\left(2f(t^2\log(n/\gamma))\right)^{2t} \max\left( 1, \frac{1}{f(t)^{2t}}  \right)$, then with probability $1 -\gamma$, we have that 
		\begin{align*}
			\left \| \E_{i \sim [m]}[(X_i - \overline{\mu})^{\otimes t}] -  \E_{X \sim D}[(X - \mu)^{\otimes t}] \right \|_{\infty} \leq \delta \;.
		\end{align*}

		\item If
		$m  > C (k/ \delta^2)  \log(n/\gamma) ( f(\log(n/\gamma))  )^2 $,
		then with probability $1 - \gamma$, it holds 
		\begin{align*}
			\left\| \E_{X \sim S}[X]- \mu \right\|_{2,k} \leq \delta.
		\end{align*}
	\end{enumerate}
\end{restatable}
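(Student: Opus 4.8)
The plan is to prove both parts by reducing to a scalar concentration inequality for averages of i.i.d.\ heavy-tailed random variables, followed by a union bound. For Part 1, recall that the $\ell_\infty$-norm of a tensor is the largest absolute value of an entry, indexed here by a multi-index $J=(j_1,\dots,j_t)\in[n]^t$. First I would replace the empirical centering by the true one: writing $(X_i-\overline{\mu})_{j_\ell}=(X_i-\mu)_{j_\ell}-(\overline{\mu}-\mu)_{j_\ell}$ and expanding the product over $\ell\in[t]$, the $J$-entry of $\E_{i\sim[m]}[(X_i-\overline{\mu})^{\otimes t}]$ becomes the ``clean'' term $\frac1m\sum_i\prod_\ell (X_i-\mu)_{j_\ell}$ plus at most $2^t$ correction terms, each carrying a factor $(\overline{\mu}-\mu)_{j_\ell}$ together with an empirical moment of order $<t$. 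Each correction term is small once $\|\overline{\mu}-\mu\|_\infty$ is small — this is the $t=1$ instance of the lemma, needing only moments of order $\log(n/\gamma)$ — and, since $f$ is non-decreasing, its sample requirement is dominated by the main bound below. So it suffices to show, for each fixed $J$, that $\big|\frac1m\sum_i\prod_{\ell=1}^t (X_i-\mu)_{j_\ell}-\E_{X\sim D}[\prod_\ell (X-\mu)_{j_\ell}]\big|\le\delta/2$ with failure probability at most $\gamma/n^t$.

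For this scalar bound, put $Z_i:=\prod_{\ell=1}^t\langle e_{j_\ell}, X_i-\mu\rangle$. By the multilinear H\"older inequality and the hypothesis, $\|Z_i\|_{L_p}\le\prod_\ell\|\langle e_{j_\ell},X_i-\mu\rangle\|_{L_{pt}}\le (f(pt))^t$ for every $p\ge 2$, hence $\|Z_i-\E[Z_i]\|_{L_p}\le 2(f(pt))^t$. I would then apply Markov's inequality to the $p$-th moment of $\frac1m\sum_i(Z_i-\E[Z_i])$, bounding that moment by a Bernstein-type moment inequality for sums of independent random variables (via truncation): $\E[|\frac1m\sum_i(Z_i-\E[Z_i])|^p]^{1/p}\lesssim \sqrt{p/m}\,\|Z_i-\E[Z_i]\|_{L_2}+(p/m)\|Z_i-\E[Z_i]\|_{L_p}$. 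Taking $p$ of order $t\log(n/\gamma)$ (so that $pt$ is of order $t^2\log(n/\gamma)$) and forcing the right-hand side below $\delta/e$ makes the Markov bound at most $\gamma/n^t$; the binding constraint is the variance term, yielding $m\gtrsim\frac{t\log(n/\gamma)}{\delta^2}\big(2f(t^2\log(n/\gamma))\big)^{2t}$ — the exponent $2t$ arises from squaring $\|Z_i-\E[Z_i]\|_{L_2}\le 2(f(2t))^t\le 2(f(t^2\log(n/\gamma)))^t$, and the $\max(1,f(t)^{-2t})$ factor is just a rescaling of $\delta$ against the moment scale. A union bound over the $n^t$ entries finishes Part 1.

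For Part 2, I would first reduce the $(2,k)$-norm to the $\ell_\infty$-norm via $\|x\|_{2,k}=\max_{|T|=k}\|x_T\|_2\le\sqrt{k}\,\|x\|_\infty$, so it is enough to prove $\|\overline{\mu}-\mu\|_\infty\le\delta/\sqrt{k}$. This is exactly the $t=1$ case of the computation above: for each coordinate $j$, the variable $\langle e_j, X_i-\mu\rangle$ has $L_p$-norm at most $f(p)$, so the same Markov-plus-moment-inequality argument with $p$ of order $\log(n/\gamma)$ and target accuracy $\delta/\sqrt{k}$, followed by a union bound over the $n$ coordinates, closes once $m\gtrsim\frac{k}{\delta^2}\log(n/\gamma)\,(f(\log(n/\gamma)))^2$ (the variance term dominates the heavy-tail term $\frac{p f(p)}{m}$ since $\sqrt{k}/\delta\ge 1$).

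The main obstacle is the heavy-tailed regime: $D$ has only $\poly(t\log n)$ bounded moments, so no subexponential tail is available and the concentration must be extracted from a moment inequality at the carefully tuned order $p$ of size $t\log(n/\gamma)$ — large enough for the union bound over the $n^t$ tensor entries to close, yet small enough that $f(pt)$, which enters to the power $2t$, stays controlled. The remaining ingredients — multilinear H\"older for the product structure of tensor entries, the $\overline{\mu}\to\mu$ correction, and the $(2,k)\to\ell_\infty$ reduction — are routine.
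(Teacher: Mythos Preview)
The paper does not include a proof of this lemma; it only states the result and defers to \cite{DKKPP22} for the argument. Your proposal is sound and follows what is the standard route for such statements: control each tensor entry via multilinear H\"older to get $\|Z_i\|_{L_p}\le (f(pt))^t$, apply a Rosenthal/Bernstein-type moment bound to the centered empirical average, invoke Markov's inequality at order $p\asymp t\log(n/\gamma)$ so that a union bound over the $n^t$ entries closes, handle the $\overline{\mu}\to\mu$ recentering by expanding the product and bounding the $2^t-1$ cross terms with the already-established $\|\overline{\mu}-\mu\|_\infty$ bound, and deduce Part~2 from the reduction $\|x\|_{2,k}\le\sqrt{k}\,\|x\|_\infty$. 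Since the paper itself provides no proof to compare against, there is nothing further to contrast.
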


\end{document}